\newcommand{\CC}{\mathcal{C}}
\newcommand{\PP}{\mathcal{P}}
\newcommand{\cli}{\CC_d^{\otimes n}}
\newcommand{\pa}{\PP_d^{\otimes n}}
\newcommand{\ZZ}{\mathbb{Z}}
\newcommand{\nn}{\nonumber}
\newcommand{\omg}{\omega}
\newcommand{\mdag}{^\dagger}
\newcommand{\half}{\frac{1}{2}}
\newcommand{\hc}{\text{H.c.}}
\newcommand{\ra}{\rightarrow}
\newcommand{\px}{\sigma^x}
\newcommand{\py}{\sigma^y}
\newcommand{\pz}{\sigma^z}
\newcommand{\id}{\openone}
\newcommand{\LL}{\mathcal{L}}
\newcommand{\vod}{V_{\text{od}}}
\newcommand{\vd}{V_{\text{d}}}
\newcommand{\hvd}{\hat{V}_{\text{d}}}
\newcommand{\heff}{H_{\text{eff}}}
\newcommand{\ket}[1]{\left | #1 \right\rangle}
\newcommand{\bra}[1]{\left \langle #1 \right |}
\newtheorem{theorem}{Theorem}
\newtheorem{lemma}{Lemma}
\begin{document}

\title{Parafermions in a Kagome lattice of qubits for topological quantum computation}
\author{Adrian Hutter}
\author{James R. Wootton}
\author{Daniel Loss}
\affiliation{Department of Physics, University of Basel, Klingelbergstrasse 82, CH-4056 Basel, Switzerland}

\date{\today}

\begin{abstract}
Engineering complex non-Abelian anyon models with simple physical systems is crucial for topological quantum computation. Unfortunately, the simplest systems are typically restricted to Majorana zero modes (Ising anyons). Here we go beyond this barrier, showing that the $\mathbb{Z}_4$ parafermion model of non-Abelian anyons can be realized on a qubit lattice. 
Our system additionally contains the Abelian $D(\mathbb{Z}_4)$ anyons as low-energetic excitations. We show that braiding of these parafermions with each other and with the $D(\mathbb{Z}_4)$ anyons allows the entire $d=4$ Clifford group to be generated. The error correction problem for our model is also studied in detail, guaranteeing fault-tolerance of the topological operations. Crucially, since the non-Abelian anyons are engineered through defect lines rather than as excitations, non-Abelian error correction is not required. Instead the error correction problem is performed on the underlying Abelian model, allowing high noise thresholds to be realized.
\end{abstract}

\maketitle

\section{Introduction}

Non-Abelian anyons exhibit exotic physics that would make them an ideal basis for topological quantum computation \cite{kitaev,nayak,pachos_book}. 
It has recently become apparent that truly scalable quantum computation with non-Abelian anyons can only be achieved when invoking active error correction, despite the protection provided by a finite anyon gap \cite{woottonNA,brell,pedrocchi}.
The development of practical systems in which non-Abelian anyons may be created, manipulated, and detected is therefore highly important.
Systems in which non-Abelian anyons arise typically suffer from one of two drawbacks: either they are experimentally extremely challenging to realize (as is the case for quantum double \cite{kitaev} or string-net models \cite{levin}), or it is not clear how they can be made compatible with the active error correction required for fault-tolerance (as is the case for FQH systems).

A particularly attractive approach for building a fault-tolerant quantum computer is to use a system of physical qubits (spin-$\half$ particles). A number of technologies allow for precise qubit control, such as superconducting qubits \cite{barends}, trapped atomic ions \cite{monroe}, spin qubits \cite{kloeffel}, or cold atoms or polar molecules in optical lattices \cite{negretti}. A qubit lattice with two-body nearest neighbour interactions would therefore be an ideal system to realize non-Abelian anyons. 
While the model we propose involves three-body interactions, we discuss how these can be obtained from two-body interactions.

Non-Abelian anyons supported by a qubit system typically are Majorana zero modes, also known as Ising anyons \cite{kitaev_honey,yao,bombin,petrova}.
A variety of proposals for experimental realization of Majorana zero modes in solid state systems have also been developed \cite{alicea}.
These anyons can be used to perform universal quantum computation when enhanced by non-topological operations \cite{bravyi,freedman}. However, these additional operations are highly resource intensive. Anyon models with a richer set of topological operations would therefore be much more practical for the realization of topological quantum computation.
Here we solve this by introducing a model composed of two-qubit Hamiltonian interactions that can realize a more complex model of non-Abelian anyons, known as $\ZZ_4$ parafermions. The error correction problem for these is studied in detail.

Parafermion modes are generalizations of Majoranas whose fusion and braiding behavior is more complex and computationally more powerful. This has led to a quest in recent years for systems that could host them. Numerous proposals for their experimental implementation in condensed matter systems such as fractional quantum Hall systems, nanowires, or topological insulators have recently appeared \cite{lindner,cheng,clarke,vaezi,burrello,mong,barkeshli1,klinovaja1,zhang,oreg,klinovaja2,klinovaja3,barkeshli2,orth,klinovaja4,alicea_para}.

Extrinsic defects in \emph{Abelian} topological states can behave like non-Abelian anyons \cite{barkeshli_theory}. 
The idea of non-Abelian anyons at the ends of defect lines, first introduced for FQH states \cite{barkeshli3}, has been adapted to the $D(\ZZ_d)$ quantum double models in Refs.~\cite{you_first,you}.
These anyons are Majorana zero-modes for $d=2$ and more powerful parafermions for $d>2$.
Unfortunately, the generalized Pauli operators appearing in the $D(\ZZ_d)$ quantum double models models coincide with the physically relevant spin-operators only for $d=2$. 
Otherwise, their structure makes them highly difficult to realize experimentally. The case $d=4$, however, allows us to combine the best of both worlds.
The joint Hilbert space of two qubits allows the $4$-dimensional generalized Pauli operators to be expressed in terms of two-qubit operators. Using this, we show how $\ZZ_4$ parafermions can emerge in a lattice of qubits with nearest-neighbor interactions only. This allows the computational power of $\ZZ_4$ parafermions to be harnessed in a qubit system.

The fact that our system is built on top of a system supporting Abelian anyons (the $D(\ZZ_4)$ quantum double model) proves very useful.
The non-Abelian parafermion modes can not only be braided with each other, but also with Abelian excitations of the quantum double model, allowing us to generate the entire Clifford group for $d=4$ by quasi-particle braiding. 
This extends beyond the limited set of gates found using the same parafermions in previous work \cite{clarke}.
Furthermore, we do not have to perform non-Abelian error correction (a still poorly understood problem \cite{woottonNA,brell,hutter,wootton_hutter,burton,hutter_proof}) to guarantee fault-tolerance, but can correct the underlying Abelian model.
This Abelian error correction problem is nevertheless more involved than the well-studied error correction problem for the standard $D(\ZZ_d)$ models, and we study it in detail.

The rest of this paper is organized as follows. 
In Sec.~\ref{sec:operators} we show how $\ZZ_4$ parafermion operators can be expressed in terms of qubit operators.
Sec.~\ref{sec:model} introduces a qubit Hamiltonian whose low-energetic excitations correspond to the $D(\ZZ_4)$ quantum double model.
In Sec.~\ref{sec:defect} we discuss how $\ZZ_4$ parafermion modes appear at the ends of defect strings in our model.
We demonstrate in Sec.~\ref{sec:nonab} how the non-Abelian braiding statistics of these modes can be used to perform logical gates.
Appendix~\ref{app:clifford} contains a proof that the set of gates which can be performed this way generates the entire Clifford group $\mathcal{C}_4$, which may be of independent interest.
In Sec.~\ref{sec:errcorr} we study the error correction problem of our model in detail and conclude in Sec.~\ref{sec:conclusions}.

\section{$\ZZ_4$ parafermion operators in terms of qubit operators}\label{sec:operators}

We consider $d$-dimensional generalizations of the Pauli matrices $X$ and $Z$. These are unitary operators satisfying $X^d = Z^d = \id$ and $ZX = \omega XZ$, where $\omega=e^{2\pi i/d}$ with integer $d>1$. If we define $Y=\omg^{(d+1)/2}X\mdag Z\mdag$, we also have $Y^d=\id$, $XY=\omg YX$, and $YZ=\omg ZY$. Operators $X_i$ and $Z_i$ act on qudit $i$ and hence $[X_i,X_j]=[Z_i,Z_j]=[X_i,Z_j]=0$ if $i\neq j$.

These operators are related to those of parafermions. Given a total ordering on the qudits $\lbrace i\rbrace$, one can obtain parafermion operators via a non-local transformation \cite{fradkin}
\begin{align}\label{eq:trafo}
 \gamma_{2i-1} = (\prod_{j<i}X_j)Z_i\,, \quad \gamma_{2i} = \omega^{(d+1)/2}(\prod_{j\leq i}X_j)Z_i\,.
\end{align}
These satisfy the $\ZZ_d$ parafermion relations,
\begin{align}\label{eq:parafermion}
 \gamma_j^d=\id\,, \quad \gamma_j\gamma_k = \omega^{\text{sgn}(k-j)}\gamma_k\gamma_j\,.
\end{align}

The operators $X$, $Y$, and $Z$ can be represented as $d$-dimensional matrices. It is thus natural to seek a representation of these operators for the case $d=4$ on the Hilbert space of two qubits (spins-$\half$).
Indeed, given two qubits $1$ and $2$, one easily verifies that the operators
\begin{align}\label{eq:XYZ}
 X &= \half(\px_1+\px_2-i\pz_1\py_2+i\py_1\pz_2) \nn\\
 Y &= \half e^{i3\pi/4}(\py_1+i\py_2+i\px_1\pz_2+\pz_1\px_2) \nn\\
 Z &= \frac{1}{\sqrt{2}}e^{i\pi/4}(\pz_1-i\pz_2) 
\end{align}
are $4$-dimensional generalized Pauli operators, and $\ZZ_4$ parafermions can be obtained from these via Eq.~(\ref{eq:trafo}).
We also note that $X^2 = \px_1\px_2$, $Y^2 =\py_1\py_2$, and $Z^2 = \pz_1\pz_2$.

\begin{figure}
\centering
\includegraphics[width=1.00\columnwidth]{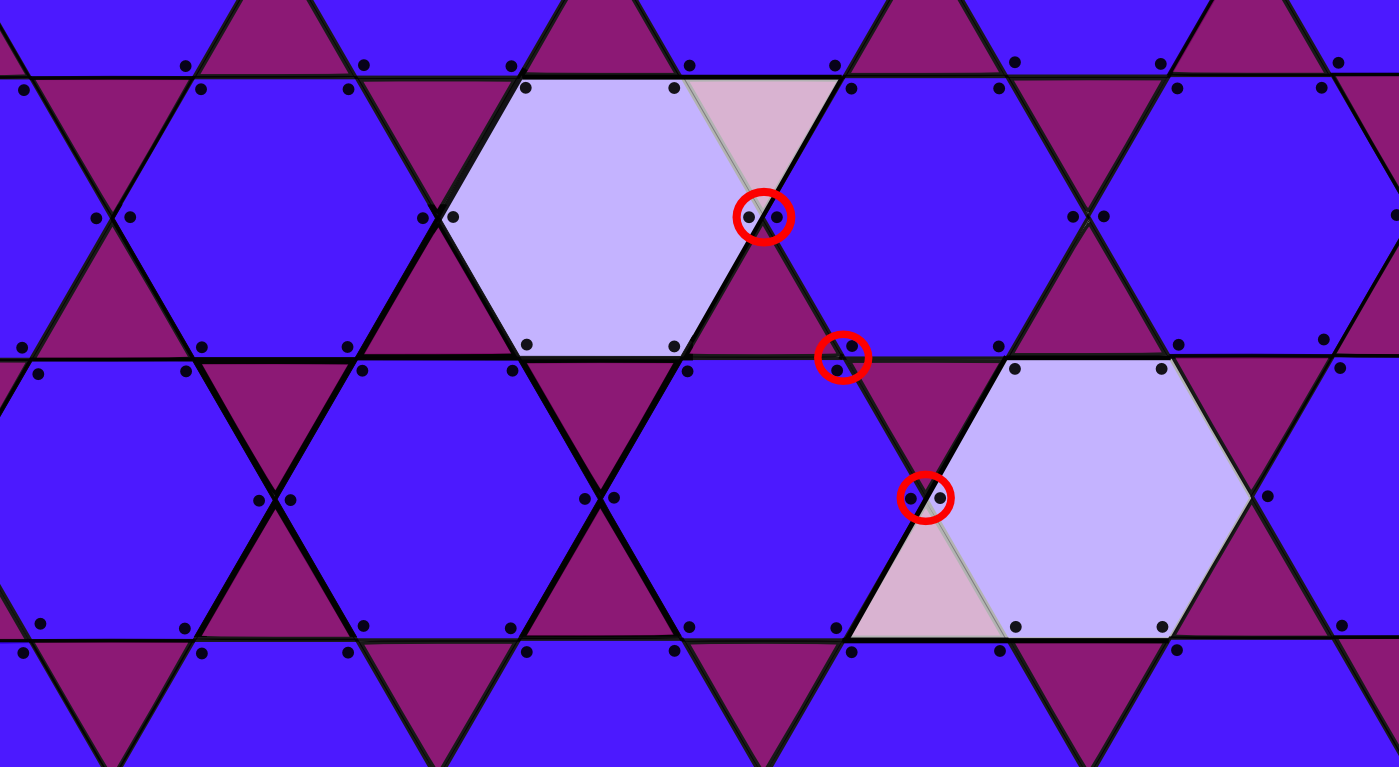}
\caption{Two qubits are located at each vertex of a Kagome lattice. Each pair of qubits hosts two $\ZZ_4$ parafermions. To unlock their potential for non-Abelian braiding, two such parafermions need to become unpaired, which is achieved by adding a defect line to the lattice. These are strings of strong local operators acting on qubit pairs (encircled). They create unpaired parafermion modes located at their ends (light pentagon-shaped regions consisting of a hexagon and a triangle).
}
\label{fig:kagome}
\end{figure}

\section{Model}\label{sec:model}

We consider a two-dimensional Kagome (trihexagonal) lattice as in Fig.~\ref{fig:kagome}. Each vertex of the lattice hosts one $4$-dimensional qudit (one pair of $\ZZ_4$ parafermions) or, in other words, two qubits.
The Hamiltonian of our model is given by
\begin{align}\label{eq:ham}
 H = \sum_{\triangle}H_{\triangle} + h\sum_i(\px_{i1}+\px_{i2})\,.
\end{align}
Here, the first term is a sum of equivalent terms for each triangle in the Kagome lattice. 
We label the vertices around one triangle $a$, $b$, and $c$, and the two qubits which are present at vertex $a$ are called $a_1$ and $a_2$, etc. The triangle terms in the Hamiltonian are then given by
\begin{align}
 H_{\triangle} &= \frac{J}{2}(\pz_{a1}\pz_{b1}\pz_{c1}+\pz_{a2}\pz_{b2}\pz_{c2}) \nn\\&\quad -\frac{J}{2}(\pz_{a1}\pz_{b1}\pz_{c2}+\pz_{a1}\pz_{b2}\pz_{c1}+\pz_{a2}\pz_{b1}\pz_{c1}  \nn\\&\quad +\pz_{a2}\pz_{b2}\pz_{c1}+\pz_{a2}\pz_{b1}\pz_{c2}+\pz_{a1}\pz_{b2}\pz_{c2}) \,.
\end{align}
The second sum $\sum_i$ in Eq.~(\ref{eq:ham}) runs over all vertices in the lattice. The two qubits located at vertex $i$ are called $i1$ and $i2$.
This second sum thus represents a uniform magnetic field in $x$-direction. 

Our Hamiltonian involves three-qubit terms of the form $\pz_a\pz_b\pz_c$. It is in principle straightforward to generate these from one-body terms and two-body interactions by use of perturbative gadgets \cite{kempe,jordan,oliveira}.
Consider a ``mediator qubit'' $u$ coupled to qubits $a$, $b$, and $c$.
Starting from a Hamiltonian
\begin{align}
 H_{\text{gadget}}=-\frac{\Delta}{2}\pz_u+\alpha(\pz_a+\pz_b)\px_u+\beta\pz_c\pz_u+\gamma\pz_a\pz_b+\delta\pz_c\,,
\end{align}
and consider the perturbative regime $\Delta\gg|\alpha|, |\beta|$. In this regime, it is possible to integrate out qubit $u$. Taking up to third-order terms into account, one finds an effective Hamiltonian
\begin{align}
 H_{\text{eff}} = (\beta+\delta)\pz_c + (-2\frac{\alpha^2}{\Delta}+\gamma)\pz_a\pz_b -4\frac{\alpha^2\beta}{\Delta^2}\pz_a\pz_b\pz_c\,.
\end{align}
Choosing $\delta=-\beta$ and $\gamma=2\frac{\alpha^2}{\Delta}$ produces the desired three-qubit term without any undesired one- or two-qubit terms.

The generation of three-body interactions in optical lattices has been discussed in detail in Refs.~\cite{pachos,buechler}. These proposals would make the perturbative gadgets unnecessary.
A ``toolbox'' for generating spin-lattice models such as ours in optical lattices has also been developed \cite{micheli}.
Generating non-Abelian anyons other than Majorana zero modes by use of perturbative gadgets from two-body interactions has previously been discussed in Refs.~\cite{kapit,brell_NJP}.

The spin-Hamiltonian in Eq.~(\ref{eq:ham}) can be exactly rewritten as
\begin{align}\label{eq:paraham}
 H = -J\sum_\triangle(Z_aZ_bZ_c + \hc) + h\sum_i(X_i+X_i\mdag)\,.
\end{align}
Here again the first sum runs over all triangles in the lattice and the corners of a triangle are labeled $a$, $b$, and $c$.
The second sum runs again over all vertices of the lattice.

We now consider the perturbative limit $h\ll J$ and regard the second sum in Eq.~(\ref{eq:paraham}) as a perturbation to the first term. 
Note that all terms in the first sum in Eq.~(\ref{eq:paraham}) commute, so the unperturbed Hamiltonian is trivially solved.
The lowest-order non-vanishing terms appear in sixth-order perturbation theory. We find an effective Hamiltonian
\begin{align}\label{eq:heff}
 \heff &= -J\sum_\triangle(Z_aZ_bZ_c + \hc) \nn\\&\quad -\frac{63}{8}\frac{h^6}{(2J)^5}\sum_{\hexagon}(X_rX\mdag_sX_tX\mdag_uX_vX\mdag_w+\hc)\,,
\end{align}
where the second sum runs over all hexagons in the Kagome lattice and $r$, $s$, $t$, $u$, $v$, $w$ label the six vertices around each hexagon.
The effective Hamiltonian in Eq.~(\ref{eq:heff}) is derived in Appendix~\ref{app:perturbation}.

We note that all summands in $\heff$ commute, so the system is exactly solvable. The excitations of this system are Abelian anyons corresponding to the $D(\ZZ_4)$ quantum double model.
The topological degeneracy of the model can be made manifest by studying non-local loop degrees of freedom that commute with all stabilizers $Z_aZ_bZ_c$, $X_rX\mdag_sX_tX\mdag_uX_vX\mdag_w$, and their Hermitian conjugates, and fullfil themselves $
\ZZ_4$ relations. A possible choice of operators is illustrated in Fig.~\ref{fig:operators}.

\begin{figure}[htb]
\centering
  \begin{tabular}{|c|c|}
    \hline
    \includegraphics[width=.45\columnwidth]{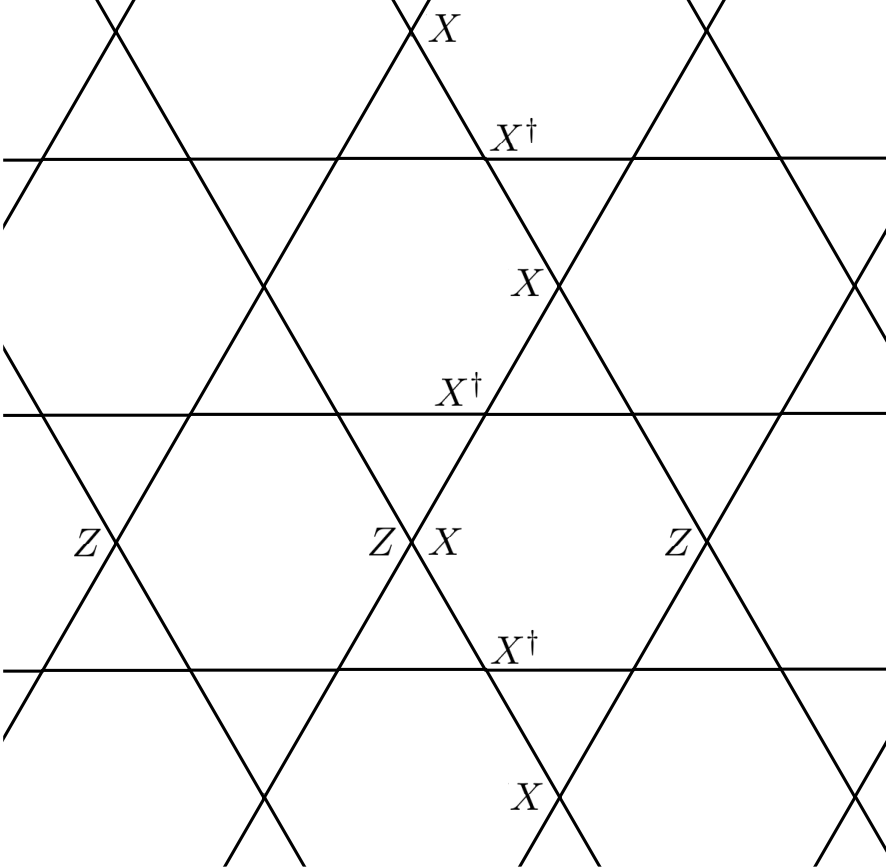} &
    \includegraphics[width=.45\columnwidth]{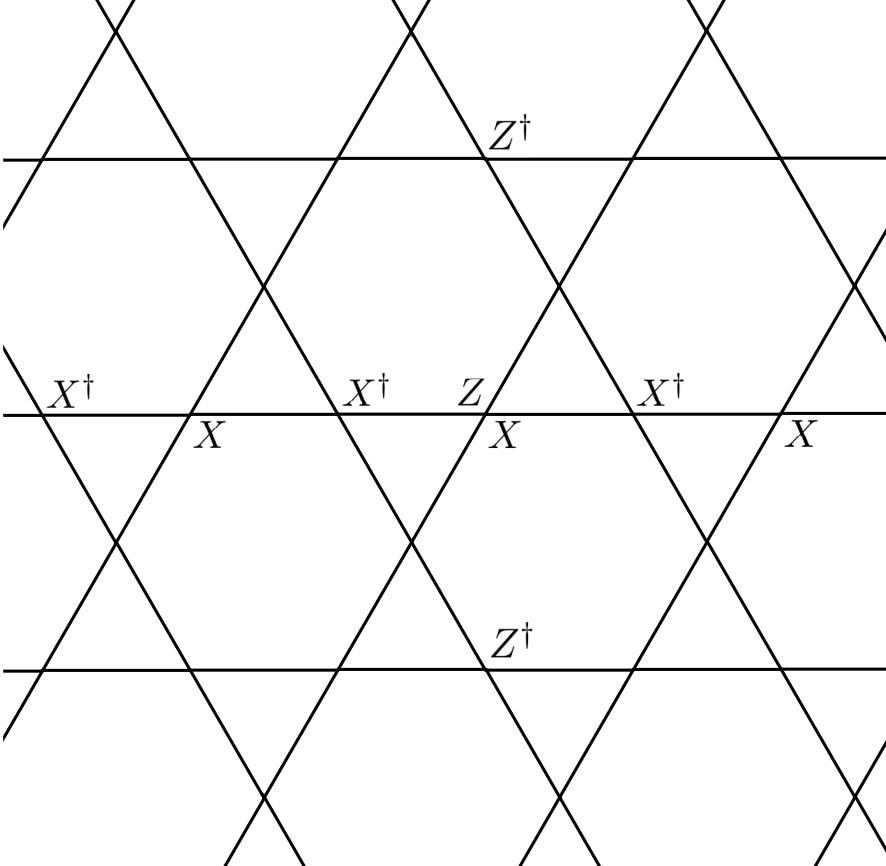}   \\
    \hline
  \end{tabular}
  \caption{Two sets of logical operators $\tilde{X}_1=XX\mdag XX\mdag\ldots$, $\tilde{Z}_1=ZZZZ\ldots$ (left figure) and $\tilde{X}_2=XX\mdag XX\mdag\ldots$, $\tilde{Z}_2=ZZ\mdag ZZ\mdag\ldots$ (right figure) that satisfy the commutation relations of $4$-dimensional generalized Pauli operators.}
  \label{fig:operators}
\end{figure}

In passing, we note that the $\ZZ_2$ version of Eq.~(\ref{eq:paraham}), in which the $\ZZ_4$ operators $X$ and $Z$ are replaced by Pauli operators $\px$ and $\pz$, leads to an effective Hamiltonian analogous to Eq.~(\ref{eq:heff}) and thus provides a very simple model with topological order.
While this model requires three-body operators $\pz\pz\pz$ as opposed to Kitaev's honeycomb Hamiltonian \cite{kitaev_honey} which involves two-body interactions only, all of these interactions connect the same spin-component, which may provide a significant practical simplification over the honeycomb model.

\section{Parafermion modes and defect lines}\label{sec:defect}

The model is constructed from the cyclic qudit operators $Z$ and $X$, which are related to parafermion operators. It is therefore natural to seek an interpretation of the model in terms of parafermionic modes.

To do this we must first fix the exact form of the stabilizers, which define the anyonic charge carried by each excitation. Let us use $E_p$ ($M_p$) to denote the stabilizer for a hexagonal (triangular) plaquette, $p$. For hexagonal plaquettes we use the convention that $E_p = X_rX\mdag_sX_tX\mdag_uX_vX\mdag_w$, where $r$ refers to the top-right corner and the other corners are labelled in an anti-clockwise fashion. For triangular plaquettes we use $M_p = Z_aZ_bZ_c$ for all triangles of the form $\triangle$ and $M_p = Z_a\mdag Z_b\mdag Z_c\mdag$ for all triangles of the form $\bigtriangledown$.
The stabilizer operators $E_p$ and $M_p$ are unitary operators with eigenvalues $\omg^k$, $k\in\lbrace0,1,2,3\rbrace$, where here and in the following $\omg=i$ for $d=4$.
An eigenvalue $\omega^g$ of the $E_P$ corresponds to a charge anyon of the form $e_g$, while $M_P$ similarly detects flux anyons $m_h$. Fusion of charge anyons forms a representation of $\ZZ_4$, as does that of fluxes. 
The convention for the stabilizer operators chosen before ensures that the anyonic charge of both charge and flux type anyons is independently conserved (modulo $4$).
A full clockwise monodromy of an $e_g$ around an $m_h$, or \emph{vice versa}, yields a phase $\omega^{gh}$, see Fig.~\ref{fig:phases} for illustration. 

\begin{figure}[htb]
\centering
  \begin{tabular}{c}
    \includegraphics[width=0.9\columnwidth]{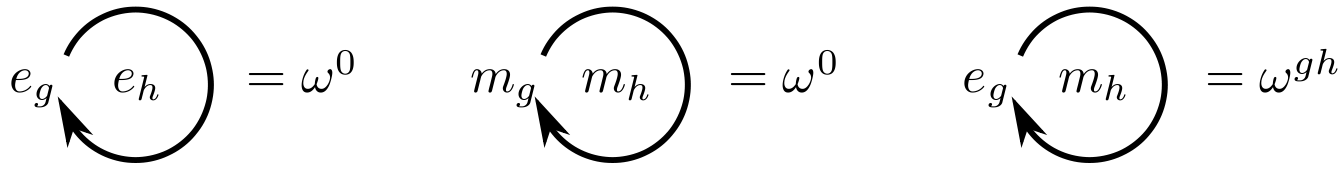} \\
    \quad \\
    \includegraphics[width=0.9\columnwidth]{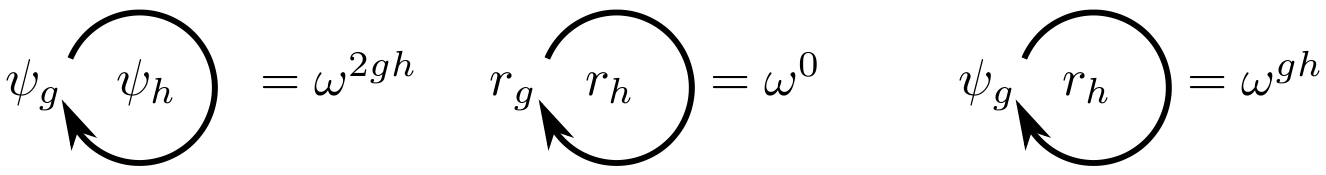} 
  \end{tabular}
  \caption{Phases obtained by braiding the $e$- and $m$-excitations of the $D(\ZZ_4)$ model around each other (top), and by braiding the excitations $\psi$ and $r$ of the transformed stabilizers around each other (bottom).}
  \label{fig:phases}
\end{figure}

Just as Majorana modes (Ising anyons) in the qubit toric code \cite{bombin,you}, parafermions appear in our system at the ends of defect strings.
For the interpretation in terms of parafermions, it will be useful to introduce a new set of composite anyons defined as $\psi_g = e_g \times m_g$. 
These also obey $\ZZ_4$ fusion with each other, and their braiding behavior can be inferred from the behavior of the constituent charge and flux particles.
The particles $\lbrace\psi_0, \psi_1, \psi_2, \psi_3\rbrace$ form a chiral Abelian anyon model with Chern number $\nu=2$ \cite{kitaev_honey}.

Note that 
\begin{align}\label{eq:composite}
 e_g\times m_h = \psi_g \times m_{h-g}\,.
\end{align}
We now perform a local transformation from the set of stabilizer generators $\lbrace E_p, M_p\rbrace$, detecting the charges on the left-hand-side of Eq.~(\ref{eq:composite}), 
to a new set $\lbrace S_p, R_p\rbrace$ which detects the two charges on the right-hand-side.
Let $H$ denote the set of hexagonal plaquettes and $T$ denote the set of triangular plaquettes. Note that $|T|=2|H|$.
Consider an injective map $\varphi:H\ra T$, which to each hexagonal operator $E_p$ assigns one of the six adjacent triangular operators $M_{\varphi(p)}$.
Typically, we choose $M_{\varphi(p)}$ to be the top-right neighbor of $E_p$, while other choices become necessary next to defect lines.
The transformation from the old to the new set of stabilizers reads $S_p=E_p$ for $p\in H$ and
\begin{align}
 R_p = \begin{cases}
	  M_pE_{\varphi^{-1}(p)}\mdag&\text{if }p\in\text{Im}(\varphi) \\
	  M_p&\text{if }p\notin\text{Im}(\varphi)
       \end{cases}
\end{align}
for $p\in T$. Here, $\text{Im}(\varphi)$ denotes the image of the map $\varphi$.

Since $\prod_{p\in H}S_p = \prod_{p\in T}R_p = \id$, the charges detected by the new stabilizers are separately conserved (modulo $4$).
Just like the $\psi_g$ anyons detected by the $S_p$ stabilizers, the $R_g$ charges detected by the $R_p$ stabilizers also form an anyon model obeying $\ZZ_4$ fusion.
However, while these two anyon models have the same fusion rules, they are not equivalent, as they exhibit different braiding behavior.
A full clockwise monodromy of a $\psi_g$ around a $\psi_h$ gives a phase of $\omg^{2gh}$, a monodromy of a $r_g$ around an $r_h$ gives a phase of $1$, and a monodromy of a $\psi_g$ around a $r_h$ gives a phase of $\omg^{gh}$, see again Fig.~\ref{fig:phases}.
Just like the $e_g$ and $m_h$ charges, the $\psi_g$ and $r_h$ particles correspond to a way of decomposing the $D(\ZZ_4)$ model into two submodels which are closed under fusion, but have non-trivial mutual braiding behavior,
\begin{align}
 D(\ZZ_4) &= \lbrace e_0, e_1, e_2, e_3\rbrace \times \lbrace m_0, m_1, m_2, m_3\rbrace \nn\\ 
&= \lbrace\psi_0, \psi_1, \psi_2, \psi_3\rbrace \times \lbrace r_0, r_1, r_2, r_3\rbrace\,,
\end{align}
where the three particle models other than $\lbrace\psi_0, \psi_1, \psi_2, \psi_3\rbrace$ correspond to the simple $\ZZ_4$ model.

The stabilizer operators $S_p$ detect the presence of $\psi_g$ anyon which are pinned to a pentagon-shaped double plaquette, made up of a neighbouring pair of triangular and hexagonal plaquettes. 
These anyons can be regarded as generalizations of Dirac fermions to the group $\ZZ_4$ (rather than $\ZZ_2$).
Just as Dirac modes can be decomposed into two Majorana modes, so too can the $\psi$ modes be decomposed into two parafermion modes. 
Two parafermion modes, $P_a$ and $P_b$, are therefore associated with each double plaquette, $P$. These are described using parafermion operators satisfying Eq.~(\ref{eq:parafermion}).
The parity operator for the $\psi$ mode associated with a pair $(j,k)$ is defined $ \omega^{(d+1)/2} \gamma_j \gamma\mdag_k$ for $j<k$, and so $S_P = \omega^{5/2} \gamma_{P_a} \gamma\mdag_{P_b}$.

For a stabilizer state, the system is within a definite eigenstate of all $S_P$. The parafermion modes are therefore all paired, with the pairs corresponding to the two within each double plaquette. 
In order to use the parafermion modes as non-Abelian anyons, some must be allowed to become unpaired. The creation and transport of unpaired parafermion modes can be done by adapting the method of Ref.~\cite{you} to the Kagome lattice. 
The method can be interpreted in terms of anyonic state teleportation \cite{bonderson,bonderson_interactions}, as explained for the Majorana case in Ref.~\cite{wootton}.

The method introduces unpaired parafermion modes at the endpoints of defect lines. These are lines on which additional single qudit terms are added to the Hamiltonian, of one of the two following forms
\begin{align}\label{eq:defects}
Y+\hc &= -\frac{1}{\sqrt{2}}(\py_1+\py_2+\px_1\pz_2+\pz_1\px_2)\,, \nn\\
\omega^{5/2}X Z\mdag+\hc &= \frac{1}{\sqrt{2}}(\py_1-\py_2+\px_1\pz_2-\pz_1\px_2)\,.
\end{align}
Specific examples are shown in Fig.~\ref{fig:defects}. 

The single qudit terms added along defect lines are much stronger than any other interactions, and thus effectively remove the qudits on which they act from the code.
This means that the $E_P$ and $M_P$ operators for the double plaquettes along these lines no longer commute with the Hamiltonian, and so can no longer be used as stabilizer generators. 
Their pentagon-shaped product, $R_P$, is used instead.
The pentagons in Fig.~\ref{fig:defects} show how next to a defect line the mapping $\varphi$ needs to pick the bottom-left triangular-shaped stabilizer of a hexagon-shaped stabilizer to ensure that their product still commutes with the Hamiltonian.

\begin{figure}
\centering
  \begin{tabular}{c}
  \quad \\
  \includegraphics[width=0.90\columnwidth]{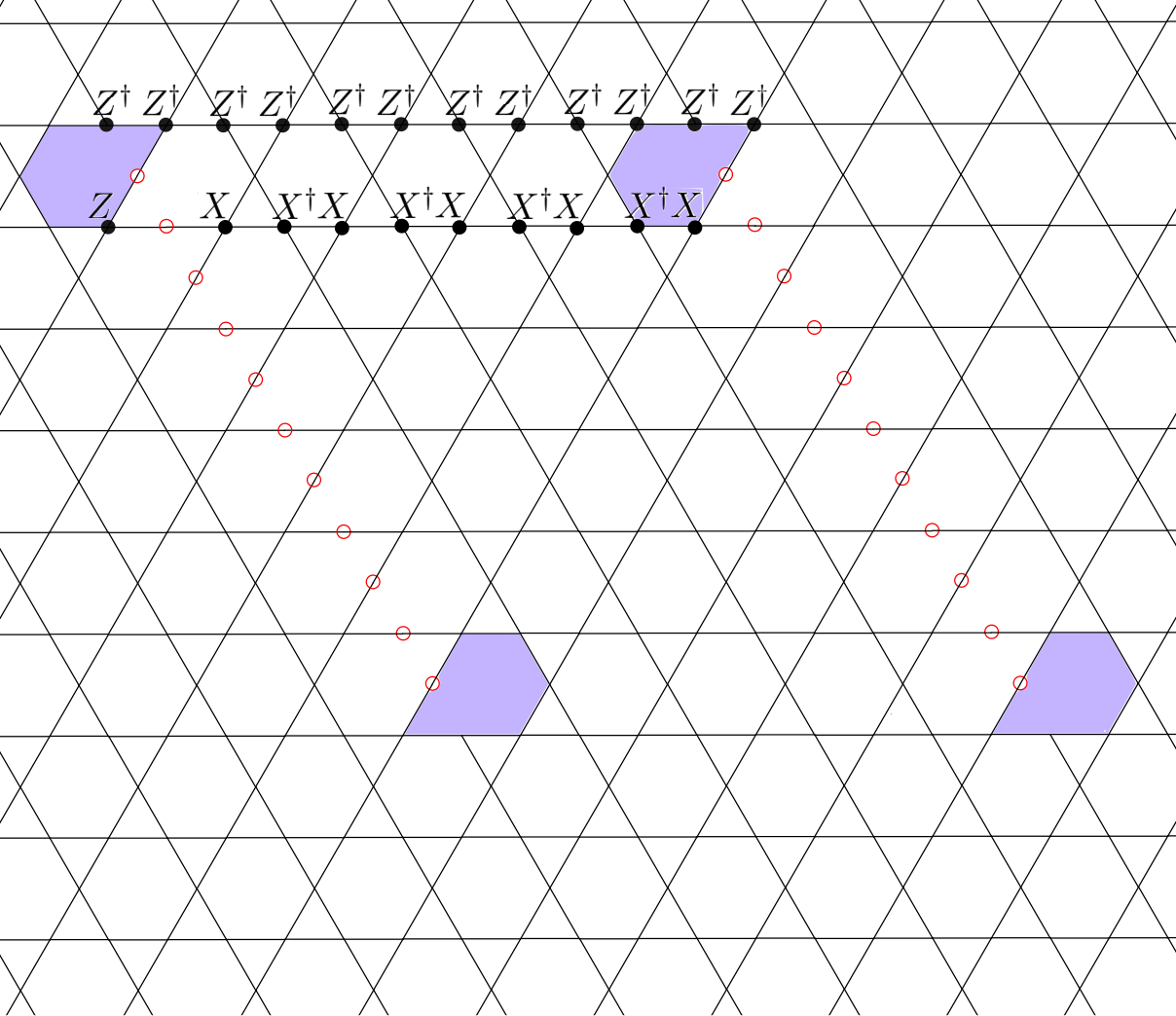} \\
  \quad \\
\includegraphics[width=0.90\columnwidth]{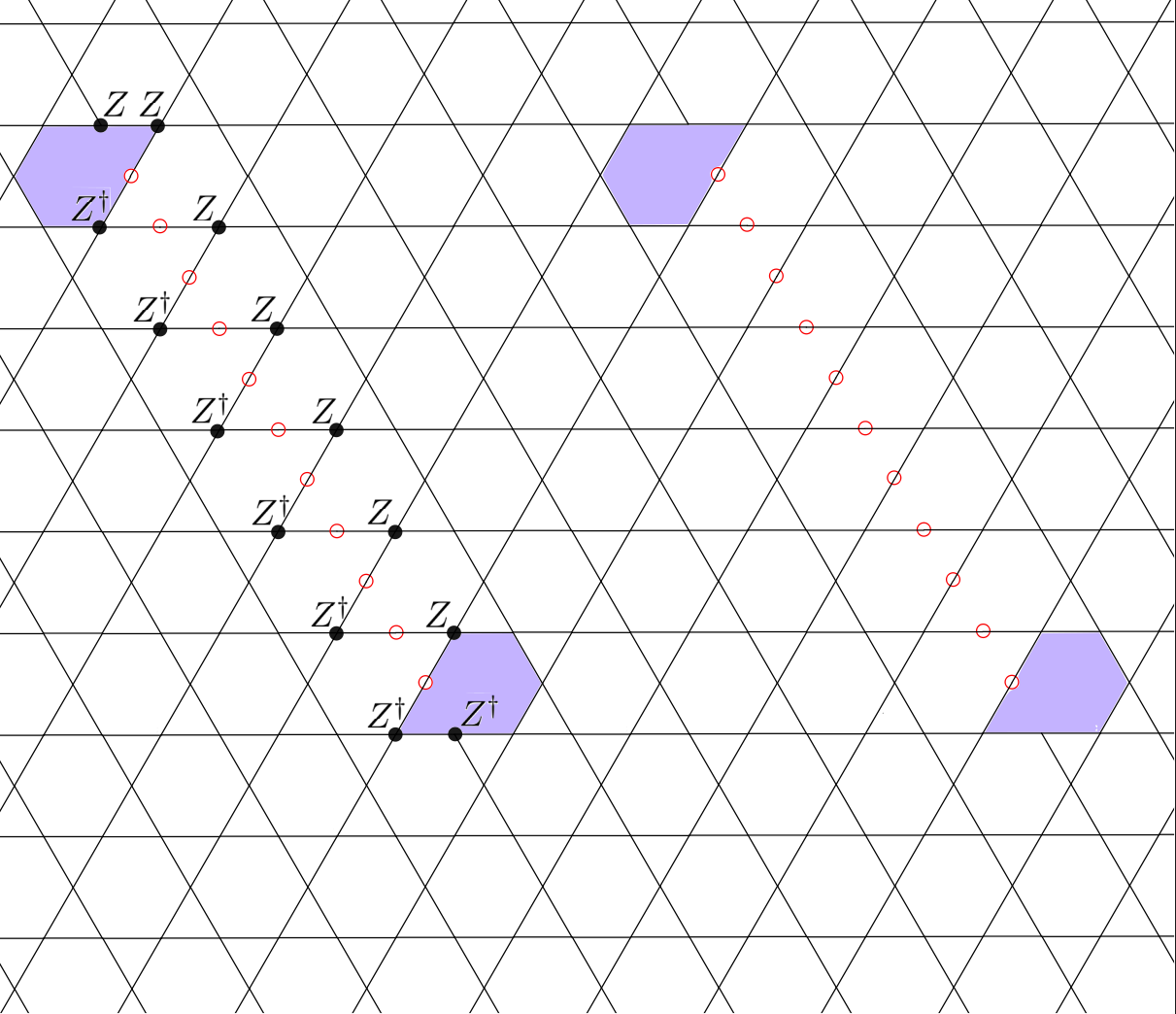} 
  \end{tabular}
  \caption{
  Strings of alternating single-qudit operators of the form $\omega^{5/2}Y_i+\hc$ or $\omega^{5/2}X_i Z_i\mdag+\hc$ (encircled) are added to the Hamiltonian. 
  These effectively eliminate the qudits on which they act from the code, leading to enlarged, pentagon-shaped stabilizers along the defect string.
  Parafermion modes reside on the pentagons at the ends of the defect strings (shaded).
  A possible choice for two logical operators $\tilde{X}_L$ (top) and $\tilde{Z}_L$ (bottom) satisfying $\tilde{Z}_L\tilde{X}_L = \omega\tilde{X}_L\tilde{Z}_L$ is illustrated.}
  \label{fig:defects}
\end{figure}

This change of the stabilizer generators of the code has a drastic effect. Consider an $e_g$ anyon moved towards a point along a defect line from one direction, and an $m_g$ moved towards the same point from the other direction. 
Both of these are detected by $R_P$ type stabilizers.
When they meet on the same double plaquette, they will fuse to form a $\psi_g$, and so not be detected by the $R_P$ stabilizers anymore. In fact, since the $S_P$ stabilizer is removed for double plaquettes along a defect line, they will not be detected by any stabilizer operator. The $\psi_g$ occupancy of a defect line corresponds to an increased groundstate degeneracy of the system, referred to as a \emph{synthetic topological degeneracy} \cite{you}.

In the following section, the $\lbrace\psi_g, r_h\rbrace$ decomposition of the $D(\ZZ_4)$ model will prove more convenient than the $\lbrace e_g, m_h\rbrace$ decomposition.
A process in which a defect line converts an $m_g$ into an $e_{-g}$ can equivalently be described as one in which a $r_g$ passes a defect line which emits a $\psi_{-g}$.

\section{Parafermions as non-Abelian anyons}\label{sec:nonab}

Since unpaired parafermion modes reside at the endpoints of defect strings, it is natural to use them to explain the properties of the modified stabilizer. Parafermion modes are described by a non-Abelian anyon model with particle species $\lbrace\psi_0, \psi_1, \psi_2, \psi_3, \sigma\rbrace$. Here $\psi_0\equiv1$ corresponds to the anyonic vacuum and $\sigma$ is an unpaired parafermion mode. The fusion rules of this anyon model are
\begin{align}\label{eq:fusion_rules}
 \sigma \times \sigma &= \psi_0 + \psi_1 + \psi_2 + \psi_3\,,\nn\\
 \psi_g \times \psi_h &= \psi_{g\oplus h}\,,\nn\\
 \psi_g \times \sigma &= \sigma\,,
\end{align}
where $\oplus$ denotes addition modulo $4$.
A pair of parafermions (or the defect line between them) may therefore collectively hold any of the four types of $\psi$ anyon. 

The anyon model with the fusion rules given in Eq.~(\ref{eq:fusion_rules}) does not allow for a non-trivial solution of the pentagon and hexagon equations.
As such, it obeys only projective non-Abelian statistics. 
The computational power of braiding parafermions has recently been studied in Ref.~\cite{hutter_para}.
In our setup, we cannot only braid the parafermions with each other, but can also braid the Abelian $e$- and $m$-particles around them, which provides the possibility to perform additional gates.
In the following, we want to study the gate set that can be generated this way.

As in the Majorana/Ising case, we use four parafermion modes (two defect strings) for which the total fusion sector is vacuum to store one logical qudit. The natural logical operators are parity operators for the pairs of parafermions. An eigenvalue $\omega^g$ corresponds to a $\psi_g$ occupancy for the pair, and so the specific result $\sigma \times \sigma  = \psi_g$ if they would be fused. We associate the $Z$ basis of the logical qudit with the $\psi$ occupancy of vertical pairs (connected by defect lines). 

Specific choices of logical operator are illustrated in Fig.~\ref{fig:defects}. The $\tilde{Z}_L$ corresponds to a clockwise loop of an $e_1$ around a defect line. The braiding of this $e_1$ around the $\psi_g$ held in the pair yields the required phase of $\omega^g$. The $\tilde{X}_L$ corresponds to clockwise loop of an $e_{-1}$ anyon which is converted to an $m_1$ through one defect line and back to an $e_{-1}$ through the other.
Equivalently, we can describe it as a clockwise loop of a $r_1$ and a transfer of a $\psi_1$ from the right to the left defect line. 

Let us denote a state in which the left defect line holds a mode $\psi_g$ and the right defect line holds a mode $\psi_h$ by $\ket{\psi_g,\psi_h}$.
Two defect lines create a $4\times4$-fold synthetic topological degeneracy. 
For computational purposes, we restrict to the $4$-dimensional subspace of states of the form $\ket{g}_L\equiv\ket{\psi_g,\psi_{-g}}$.
This is the set of states which can locally be created from the anyonic vacuum.
The effect of the logical operators on these states is $\tilde{X}_L\ket{g}_L=\ket{g\oplus1}_L$ and $\tilde{Z}_L\ket{g}_L=\omg^g\ket{g}_L$.

In addition to the logical operators $\tilde{X}_L$ and $\tilde{Z}_L$, which can be performed in our model by braiding the \emph{Abelian} $D(\ZZ_4)$ anyons around the parafermion modes (ends of defect strings), 
we can perform further topologically protected single-qudit and two-qudit gates by braiding the parafermion modes themselves. 
Defect lines used for braiding are shown in Appendix~\ref{app:braiding}. 
Crucially, braiding parafermions allows one to perform an entangling gate by topological means, which is in contrast to Majorana fermions \cite{clarke}. 
What is more, exploiting the fact that our non-Abelian system is built on top of an Abelian $D(\ZZ_4)$ system allows us to generate the entire $4$-level Clifford group by braiding quasi-particles, as we discuss in the following.

For the rest of this section, $X$ and $Z$ refer to the logical operators called $\tilde{X}_L$ and $\tilde{Z}_L$ before, respectively.
The first column in Fig.~\ref{fig:braiding} illustrates how braiding of $D(\ZZ_4)$ charges and fluxes can be used to perform logical $X$ and $Z$ gates.
Whether an $e_1$ or an $m_1$ anyon is used to perform the logical $Z$ gate is irrelevant.

\begin{figure}
 \includegraphics[width=.8\columnwidth]{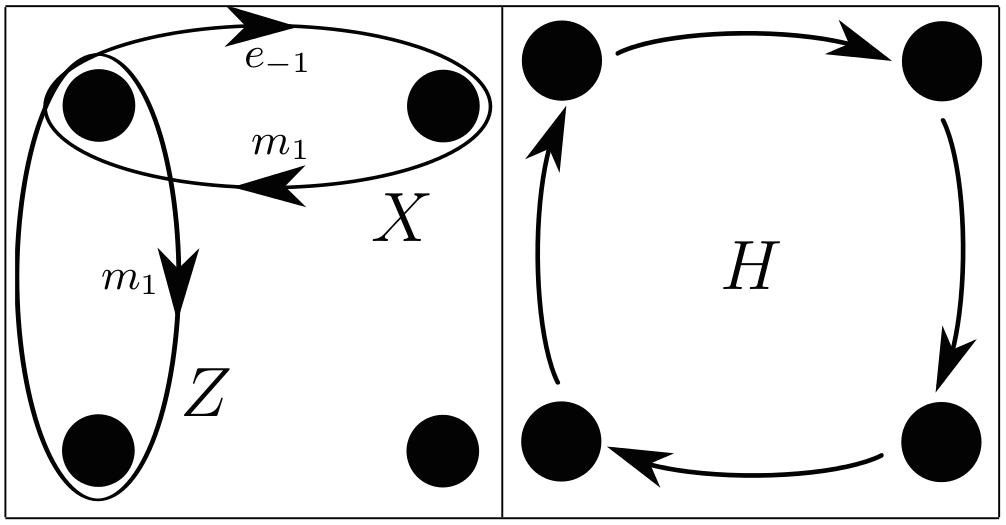}
  \caption{All generators of the single-qudit Clifford group can be performed by braiding quasi-particles. The four circles correspond to the four parafermion modes which are used to store one logical qudit. 
  The left part of the figure illustrates how to perform the logical operators $X$ and $Z$ by braiding the Abelian excitations of the $D(\ZZ_4)$ model around the parafermions. 
  The right part demonstrates a logical Hadamard gate $H$, which is performed by braiding the parafermion modes themselves.}
  \label{fig:braiding}
\end{figure}

Consider two ?parafermion modes storing a $\psi_g$ particle. A full clockwise monodromy of one parafermion around the other can be understood as a monodromy of the constituent $e_g$ around the $m_g$, yielding an $\omg^{g^2}$ phase.
We can thus expect a single exchange of the two parafermion modes storing a $\psi_g$ to yield a square root of this phase, such as $\omg^{g^2/2}$. This is demonstrated directly by studying the necessary microscopic operations in App.~\ref{app:exchange}.

For a logical qudit stored in four parafermion modes, let $S$ denote a clockwise exchange of a vertical pair of parafermion modes, and $T$ an exchange of a horizontal pair, see Fig.~\ref{fig:generators}. 
As discussed, we have $S=\sum_g\omg^{g^2/2}\ket{g}\bra{g}$, while $T$ is diagonal in the logical $X$ basis. In the logical $Z$ basis, $T$ reads (for $d=4$)
\begin{align}
 T = \half e^{-i\pi/4}\sum_{gh}e^{i\frac{\pi}{4}(g-h)^2}\ket{g}\bra{h}\,.
\end{align}

\begin{figure}[htb]
\centering
  \begin{tabular}{|c|c|}
    \hline
    \includegraphics[width=0.35\columnwidth]{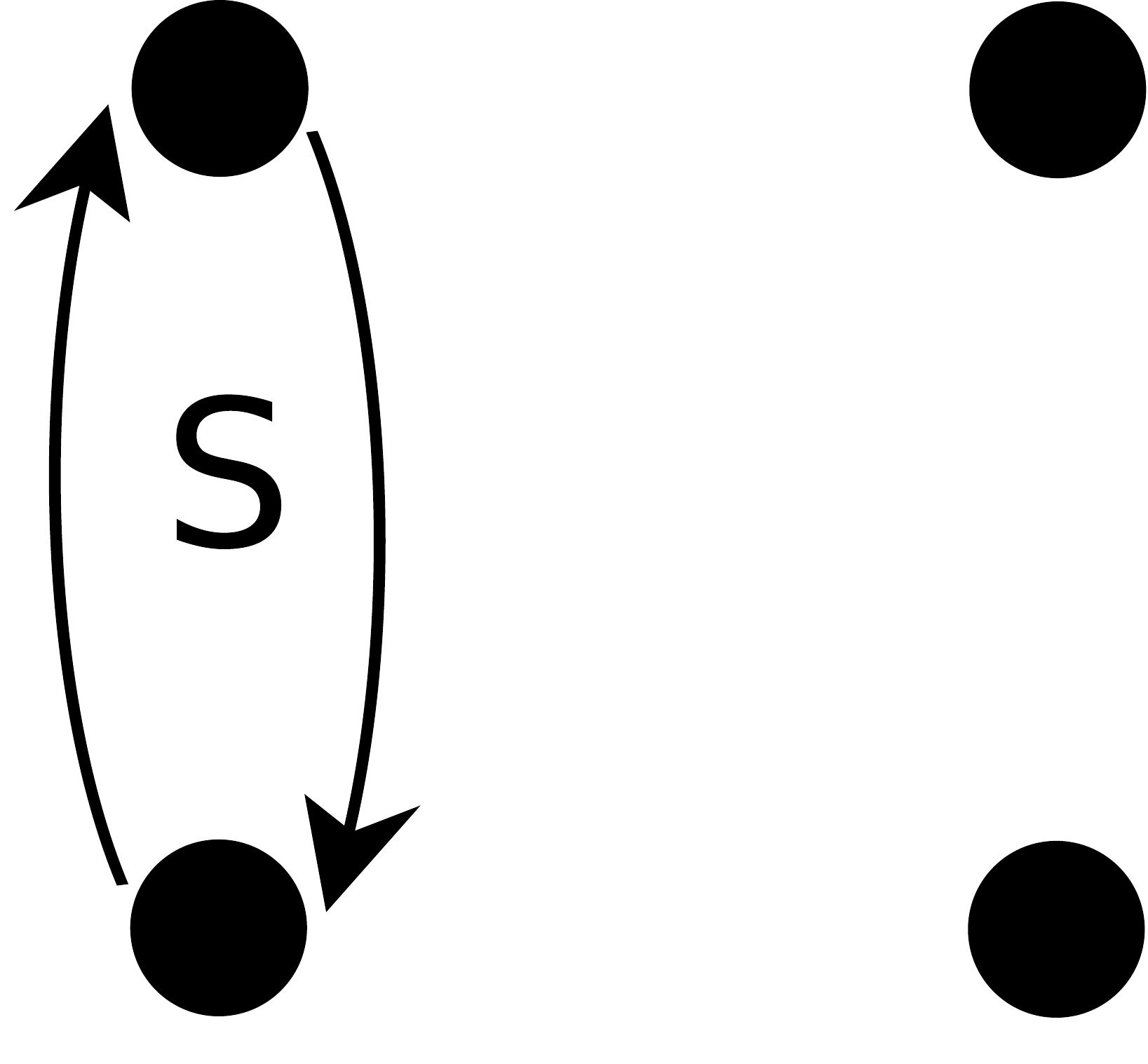} &
    \includegraphics[width=0.31\columnwidth]{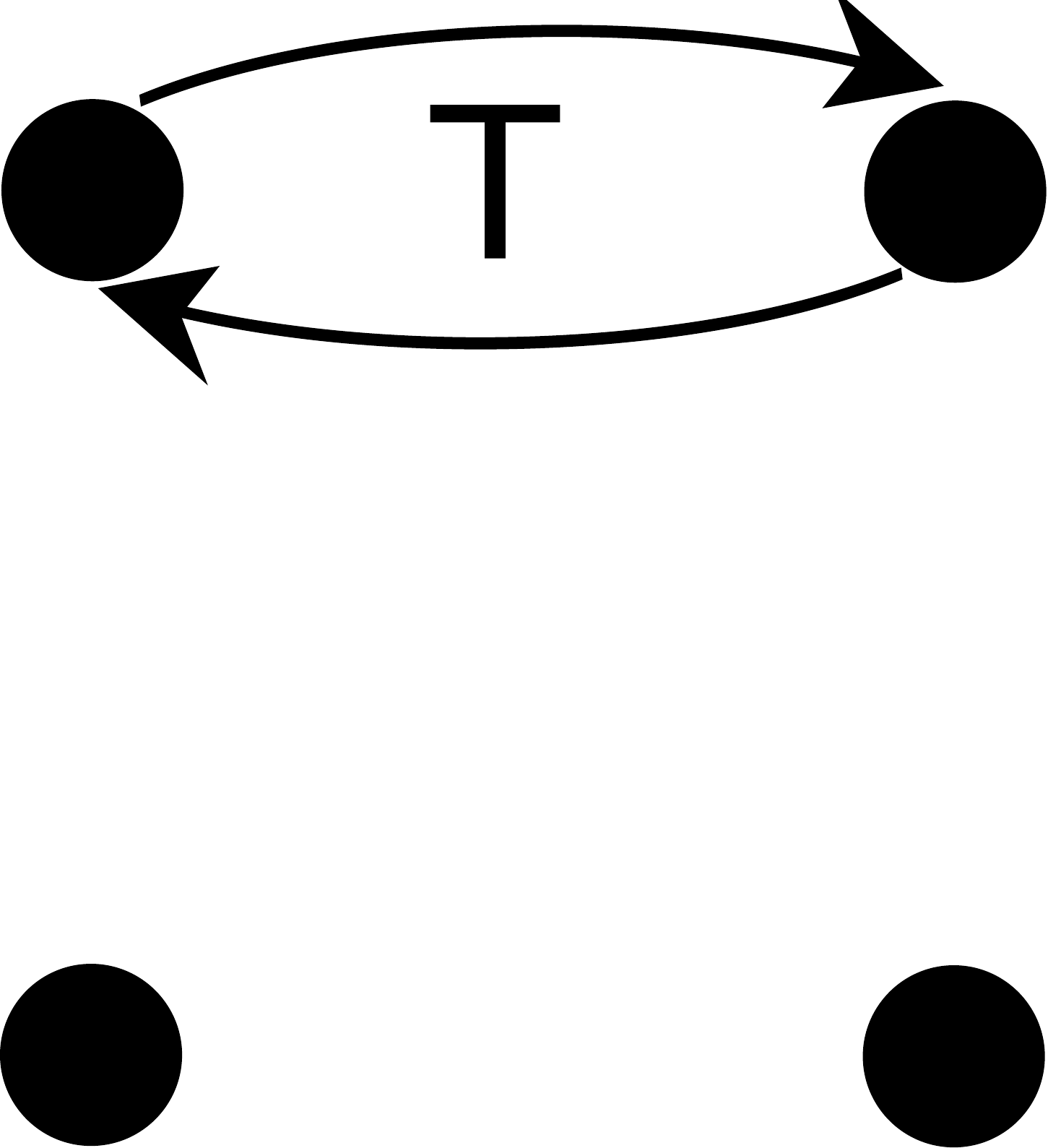} \\
    \hline
  \end{tabular}
  \caption{Generators $S$ and $T$ of all gates that can be performed on a qudit stored in the fusion space of four parafermions by braiding them.}
  \label{fig:generators}
\end{figure}

Again, in contrast to Majorana fermions, parafermions support an entangling gate between two logical qudits by braiding operations \cite{clarke}. 
The controlled phase-gate $\Lambda$ is defined by its action on a logical two-qudit basis state, $\Lambda\ket{g,h}=\omg^{gh}\ket{g,h}$.
In our parafermion scheme, an entangling gate can be performed by braiding of a pair of parafermions from one qudit with a pair from the other. 
Let us consider, for example, the braiding of the left vertical pair for both qudits. 
For an initial logical product state $\ket{g,h}$, the process corresponds to braiding a $\psi_g$ clockwise around a $\psi_h$, which yields a phase of $\omega^{2gh}$. 
The resulting operation is therefore the squared controlled phase-gate $\Lambda^2$.
For $d=2$, corresponding to the Ising/Majorana case, $\Lambda^2=\id$, and so this operation is trivial. 
For $d>2$, however, it is a non-trivial entangling gate, akin to the one proposed in Ref.~\cite{clarke}.

Clearly a more powerful entangling gate would be $\Lambda$ itself. This can be achieved for $\ZZ_d$ parafermions for odd $d$ by taking the $(d+1)/2$-th power of $\Lambda^2$. 
However these do not admit the simple decomposition into qubits that we have used in defining the model. 
Fortunately, we can make use of the underlying charge and flux anyons to realize $\Lambda$ despite the even qudit dimension.

The defect line may be interpreted as a hole for $\psi$ type anyons: an area in which they may be placed such that their state becomes delocalized along the line and they are no longer detected by the stabilizers \cite{raussendorf,fowler_hole,wootton_hole2}. Similar holes can also be engineered for the constituent charge and flux anyons. A defect line is therefore a special case of the combination of a charge and flux hole, in which only $\psi_g = e_g \times m_g$ type anyons may reside rather than general $e_g \times m_h$ anyons. Nevertheless, we can consider a process in which a defect line is transformed into a separate charge and flux hole. Details on these holes and the transformations between them can be found in Appendix~\ref{app:holes}.

When only the charge hole of one qubit is braided around the defect line of another, the process for an initial state $\ket{g,h}$ corresponds to braiding an $e_g$ around a $\psi_h$, which would yield the phase $\omega^{gh}$. The charge and flux holes can then be recombined into a defect line. The net effect of the entire process is to apply the controlled phase gate $\Lambda$.
Such a process is illustrated in Fig.~\ref{fig:phasegate}.

\begin{figure}
\includegraphics[width=0.8\columnwidth]{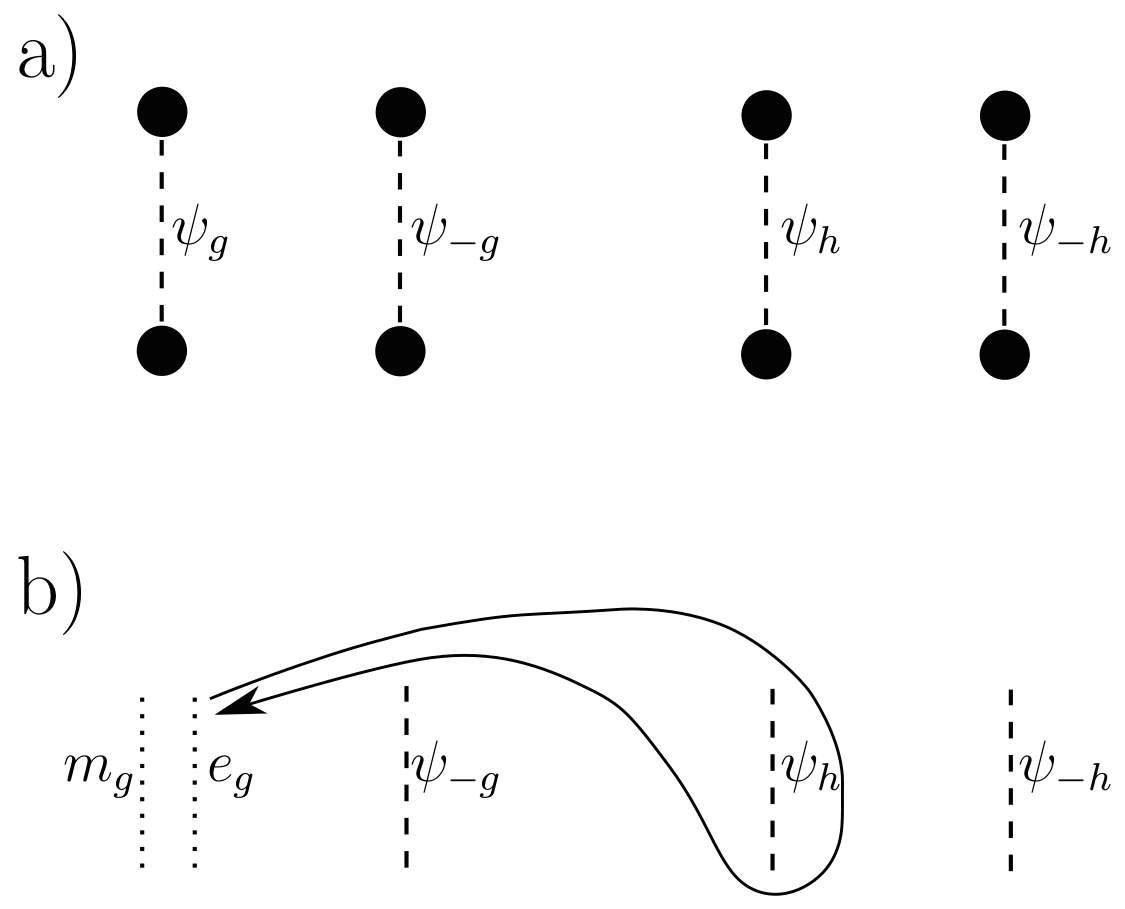}
  \caption{Performance of a controlled phase-gate. The a) part of the figure shows a logical product state $\ket{g,h}$ stored in the fusion space of eight parafermions. 
	   The defect line storing a mode $\psi_g$ can be split into two defect lines storing $D(\ZZ_4)$ charges $e_g$ and $m_g$, respectively. 
	  Braiding both endpoints of one of these lines clockwise around the defect line storing the $\psi_h$ mode, as shown in the b) part, produces a phase $\omg^{gh}$, as required.}
  \label{fig:phasegate}
\end{figure}

One process which could split the defect line in this way is simply to intersect it with two others. One would be a line along which charge anyons are hopped by high-strength terms. The other would similarly hop flux anyons. The stabilizers that detect charges and fluxes, respectively, along these lines would then be suppressed. By adiabatically removing the defect line which delocalizes $\psi$ modes, its $\psi_g$ anyon occupation would be transferred to these two lines. The recombination of the defect line would be done by the reverse process.

For a tensor product of $d$-level systems, the Clifford group $\mathcal{C}_d$ consists of gates that map tensor products of $d$-level Pauli operators to other such tensor products under conjugation.
In Appendix~\ref{app:clifford}, we prove the following theorem.

\vspace{2mm}
\noindent\textbf{Theorem.}
\textit{The single-qudit gates $S$, $T$, and $Z$, and nearest-neighbor controlled phase-gates $\Lambda$ generate the entire Clifford group $\mathcal{C}_4$.}
\vspace{2mm}

As an example, $\tilde{H}=STS=TST$ satisfies $\tilde{H}X\tilde{H}\mdag=Z$ and $\tilde{H}Z\tilde{H}\mdag=X\mdag$, so it can be identified with the logical Hadamard gate, up to a phase. 
Indeed, using the standard definition
\begin{align}
 H=\frac{1}{\sqrt{d}}\sum_{gh}\omg^{gh}\ket{g}\bra{h}\,,
\end{align}
one verifies that $\sqrt{\omg}H=\tilde{H}$.

One possible implementation of $H$ (up to a phase) is a cyclic permutation of the four parafermion modes, as in Fig.~\ref{fig:braiding}.
This can be pictorially understood as follows. 
An $X$ corresponds to a transfer of a $\psi_1$ from the right to the left defect line, accompanied by a clockwise loop of a $r_1$ around a horizontal pair.
A $Z$ corresponds to a clockwise loop of a $r_1$ around a vertical pair.
A $\pi/2$ rotation as performed by $H$ thus maps these two operations onto each other, up to the fact that we do not perform a vertical $\psi_1$ transfer, as the $\psi$ occupancy of the vertical pair is delocalized along the defect line.

\section{Error correction}\label{sec:errcorr}

For any system with a finite energy gap at finite temperature, excitations will appear with a finite density. This corresponds to finite length scale on which quantum computation can be performed before errors are almost certain to appear. This length scale can be increased by increasing the gap or lowering the temperature. However, neither of these methods is truly scalable. Error correction is therefore required if scalable quantum computation is to be performed.

For non-Abelian systems, the first studies of the corresponding error correction problem have recently appeared \cite{brell,woottonNA,hutter,wootton_hutter,burton,hutter_proof}. Error correction for non-Abelian anyons is still poorly understood and its feasibility has not been demonstrated for the (realistic) time-continuous case. It comes thus very welcome that while our system provides the computational power of non-Abelian parafermions, its physical excitations still are Abelian $D(\ZZ_4)$ anyons, and the error correction problem for $D(\ZZ_n)$ quantum double models (including the time-continuous case) is well-studied \cite{duclos,anwar,hutter,watson_double,wootton_hutter}. However, when correcting these $D(\ZZ_4)$ anyons, we face a number of difficulties not considered in previous studies \cite{duclos,anwar,hutter,watson_double}:
\begin{itemize}
 \item[(i)] Our stabilizer operators are products of $\ZZ_4$ qudit operators $X$, $X\mdag$, $Z$, $Z\mdag$, while an error model is realistically expressed in terms of single-qubit operators $\px$, $\py$, $\pz$. These do not map eigenstates of the stabilizer operators to other eigenstates and one single-qubit operator can produce a product of up to three qudit operators (see below).
 \item[(ii)] We consider quantum information stored in a synthetic topological degeneracy, which involves a defect line allowing anyons to change from one sublattice to the other (stars to hexagons and \emph{vice versa}). We thus cannot decode each sublattice separately, as usually done for the toric code and other $D(\ZZ_d)$ quantum double models, but have to correct both of them simultaneously while taking the possibility of transferring anyons from one to the other into account.
 \item[(iii)] Besides simplistic i.i.d.\ error models (such as depolarizing noise), we are particularly interested in Hamiltonian protection of a quantum state subject to thermal errors.
 \item[(iv)] We do not consider a square lattice, but a trihexagonal one, which makes moving anyons and defining their distance more involved.
\end{itemize}

\subsection{Error model}

Since our $4$-level qudits are composed of two qubits, it is natural to consider an error model in terms of single-qubit operations $\px$, $\py$, and $\pz$.
For a qudit hosted in two qubits $1$ and $2$, single-qubit Pauli operators can be expressed in terms of $\ZZ_4$ operators by inverting Eq.~\ref{eq:XYZ}. We find
\begin{align}
 \px_1 &= \half X(1-Z^2) +\hc\,, \nn\\ \px_2 &= \half X(1+Z^2) +\hc\,, \nn\\
 \py_1 &= \half e^{i5\pi/4} Y(1+Z^2) +\hc\,, \nn\\ \py_2 &= \half e^{i3\pi/4}Y(1-Z^2) +\hc\,, \nn\\
 \pz_1 &= e^{-i\pi/4}Z +\hc\,,  \nn\\ \pz_2 &= e^{i\pi/4}Z +\hc\,.
\end{align}
If we start from an eigenstate of all stabilizer operators, applying single-qubit Pauli operators will generate a superposition of states corresponding to different syndrome outcomes.
By measuring all stabilizer operators, we can project again into a subspace with definite syndrome values. Each single-qubit Pauli operator thereby translates into a product of up to three qudit operators.
Table~\ref{tab:conversion} summarizes (up to irrelevant phases) into which qudit operators a certain single-qubit Pauli operator will translate with equal probability.

\begin{table}
\begin{tabular}{|l|l|}
 \hline
  $\px_1$, $\px_2$ & $X$, $X\mdag$, $XZ^2$, $X\mdag Z^2$ \nn\\
 \hline
  $\py_1$, $\py_2$ & $XZ$, $X\mdag Z$, $XZ\mdag$, $X\mdag Z\mdag$ \nn\\
 \hline
 $\pz_1$, $\pz_2$ & $Z$, $Z\mdag$ \nn\\
 \hline
\end{tabular}
\caption{Conversion from single-qubit Pauli operators to $4$-dimensional generalized Pauli operators. 
When a syndrome measurement is performed, a Pauli operator is converted to each of the generalized Pauli operators in the right-hand column with equal probability.}
\label{tab:conversion}
\end{table}

As a first simple error model, which does not involve a notion of Hamiltonian protection, we consider depolarizing noise. That is, for each qubit of the code we apply a Pauli operator with some probability $p$ (the \emph{depolarization rate}), where each of the three Pauli operators is chosen with equal probability.

More interesting from a physical perspective is a thermal error model. We consider a quantum state stored in the degenerate groundstates of the Hamiltonian given in Eq.~(\ref{eq:heff}), and assume that the system is weakly coupled to a heat bath at some temperature $T$. Following e.g. Ref.~\cite{brell}, we assume that evolving the system according to the Metropolis algorithm provides a reasonable approximation of the thermalization process, 
since the evolution obtained by means of the Metropolis algorithm is local, Markovian, and has the thermal state as its unique fixed point.

During our simulation, we proceed as follows. We first pick one of the spins-$\half$ of the system at random, then pick one of the three single-qubit opertors acting on that qubit at random, 
and convert that to a $4$-dimensional generalized Pauli operator according to Table~\ref{tab:conversion}. We then calculate the energy cost $\Delta_{\text{tot}}$ of applying that generalized Pauli operator (or products thereof).
This energy cost is of the form 
\begin{align}
 \Delta_{\text{tot}} = m\Delta_{\triangle} + n\Delta_{\hexagon}\,,
\end{align}
where $\Delta_{\triangle}$ and $\Delta_{\hexagon}$ are the energy costs of creating a single triangle/hexagon-type anyon with charge $1$ or $3$ in Eq.~(\ref{eq:heff}), respectively.
(That is, $\Delta_{\triangle}=2J$ and $\Delta_{\hexagon}=2\frac{63}{8}\frac{h^6}{(2J)^5}$.) 
Creating an anyon with charge $2$ will have an energy cost $2\Delta_{\triangle}$ or $2\Delta_{\hexagon}$. 
The coefficients $m$ and $n$ are elements of $\lbrace0,\pm2,\pm4\rbrace$, depending on the change in anyonic charge.
The proposed error is then accepted with probability $\min\lbrace1, e^{-\Delta_{\text{tot}}/k_BT}\rbrace$. 
The noise model we apply is thus the standard classical Metropolis algorithm that maps eigenstates of Eq.~(\ref{eq:heff}) to other eigenstates.
At any given time during our simulation, the system is ``classical'' in the sense that it does not involve superpositions of different anyon configurations.

If the proposed error is accepted, we copy the current state of the system and try to correct it. If correction is successful, we continue our simulation with the uncorrected version of the system. If correction fails (for at least one logical operator), we interpret this as the quantum information having survived for a time which is given by the number of Metropolis steps divided by the number of spins in the system.

The thermal error model has three relevant energy scales $k_BT$, $\Delta_{\hexagon}$, and $\Delta_{\triangle}$.
Since $\Delta_{\hexagon}$ appears in higher-order perturbation theory than $\Delta_{\triangle}$, we expect $\Delta_{\triangle}>\Delta_{\hexagon}$. Furthermore, effective protection requires $k_BT<\Delta_{\hexagon},\Delta_{\triangle}$.
We introduce a parameter $\lambda$ which quantifies the separation of these three energy scales, i.e., $\Delta_{\hexagon}=\lambda k_BT$ and $\Delta_{\triangle}=\lambda^2 k_BT$.
Very high values of $\lambda$ are uninteresting, since they exponentially suppress errors from occurring.

\subsection{Without defects}

If there are no defect lines present, the anyonic charge of both types of anyons is conserved (modulo $4$), and they can be corrected separately.
Various techniques have been developed for correcting general $D(\ZZ_n)$ quantum double models \cite{duclos,anwar,hutter,watson_double}.
However, correcting the $D(\ZZ_4)$ case is particularly easy, since we can exploit the relation $\ZZ_4/\ZZ_2\simeq\ZZ_2$. Specifically, we can first fuse all oddly-charged anyons in pairs. In a second round, the remaining anyons, which are all of charge $2$, are fused in pairs.
In order to find these pairings, we use the library \texttt{Blossom V} \cite{kolmogorov}, which is the latest implementation of the efficient minimum-weight perfect matching algorithm due to Edmonds \cite{edmonds}.
The weight between two equal-type anyons is thereby defined as the minimal number of generalized Pauli operators that need to be applied to create a pair of anyons at the two given locations.

\begin{figure}
 \includegraphics[width=.8\columnwidth]{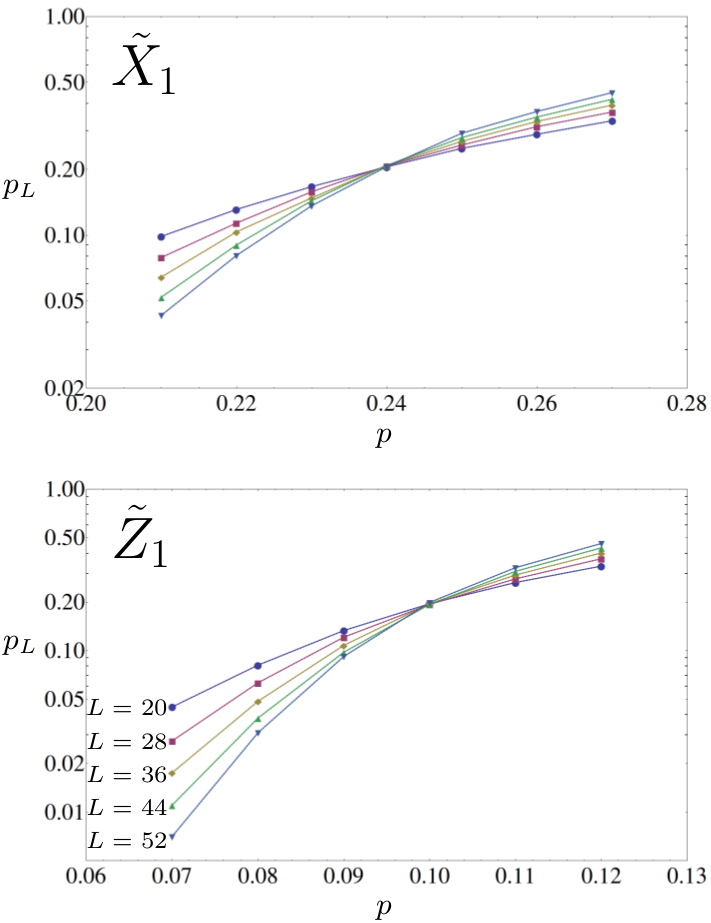}
  \caption{Error rates $p_L$ of the logical operators $\tilde{X}_1$ and $\tilde{Z}_1$ illustrated in Fig.~\ref{fig:operators} as a function of the qubit depolarization rate $p$ for code sizes $L=20,28,36,44,52$. Each data point represents $10^4$ logical errors, such that error bars are negligible. 
  We recognize a threshold error rate $p_c\approx24\%$ for $\tilde{X}_1$ and $p_c\approx10\%$ for $\tilde{Z}_1$.}
  \label{fig:depolThreshold}
\end{figure}

Fig.~\ref{fig:depolThreshold} shows our results for the depolarizing noise model, i.e., the logical error rates of the the logical operators $\tilde{X}_1$ and $\tilde{Z}_1$ illustrated in Fig.~\ref{fig:operators} as a function of the depolarization rate $p$. 
One clearly recognizes threshold error rates $p_c\approx24\%$ and $p_c\approx10\%$, respectively. The equivalent figures for the logical operators $\tilde{Z}_2$ and $\tilde{X}_2$ look very similar and yield equivalent threshold error rates $p_c$.

These thresholds are best compared with those for an equivalent code based on $\ZZ_2$ anyons, and so with only a single qubit on each vertex. For independent bit and phase flips, the thresholds for $\tilde{X}_1$ and $\tilde{Z}_1$ are $p_c\approx 16.4\%$ and $p_c\approx 6.7\%$, respectively \cite{beat,abbas}. When the hexagonal and triangular plaquettes are decoded separately, these correspond to thresholds of $p_c\approx 24.6\%$ and $p_c\approx 10.5\%$ for depolarizing noise. The similarity of these $\ZZ_2$ values with those of $\ZZ_4$ is remarkable. This qudit code is therefore just as adept at suppressing qubit noise as its qubit counterpart.

\begin{figure}
 \includegraphics[width=.8\columnwidth]{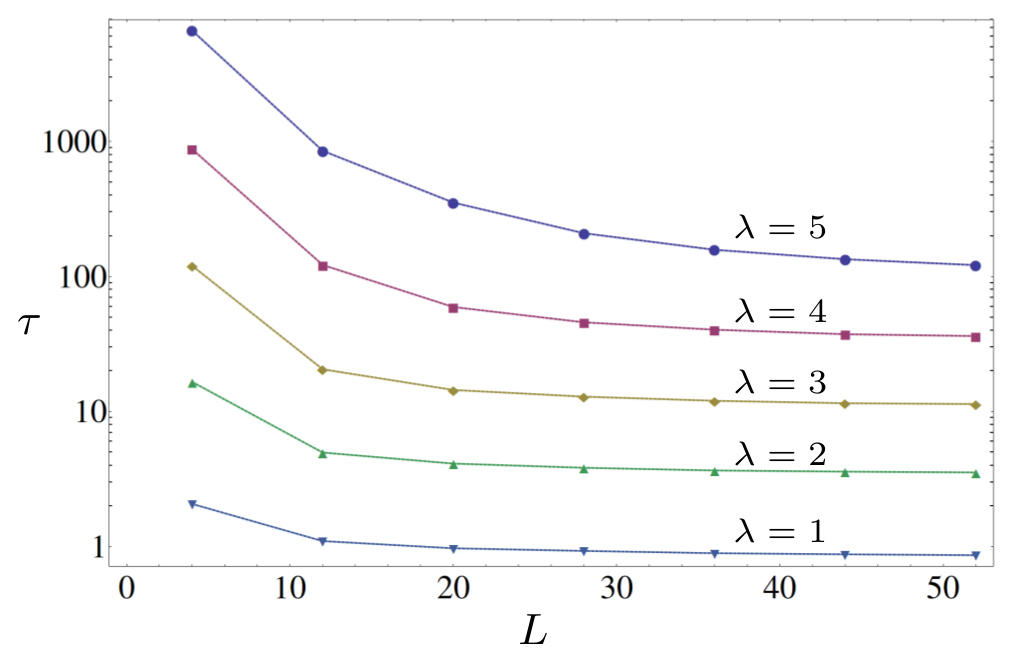}
  \caption{Average lifetimes $\tau$ of the logical qudit with logical operators $\tilde{X}_1$ and $\tilde{Z}_1$ as a function of code size $L$ for $\lambda=1,2,3,4,5$. Each data point represents $10^4$ experiments. 
  The lifetime is defined as the number of Metropolis steps until the first logical operator detects an error, divided by the number of spins in the code.}
  \label{fig:thermalLifetimes}
\end{figure}

It is well-known that the finite-temperature lifetime of a two-dimensional quantum memory with local interactions only is upper-bounded by a constant independent of the system size, see e.g.\ Ref.~\cite{brown}.
Fig.~\ref{fig:thermalLifetimes} shows the lifetime of a logical qudit with logical operators $X_1$ and $Z_1$ subject to the thermal error model. We notice lifetimes that decrease to an asymptotic value for large $L$ and considerable finite-size tails.
These tails correspond to the regime in which the breakdown of error correction is not due to the density of anyons becoming so high that pairing them becomes ambiguous, 
but where the breakdown is caused by one of the first pairs wandering along a topologically non-trival path around the torus. 
The smaller the system, the longer it takes to produce an anyon pair, leading to the observed tails for small enough $L$ and $T$ (large enough $\lambda$).

\subsection{With defects}

When defect lines as in Fig.~\ref{fig:defects} are present, the error correction problem becomes more involved. 
It is no longer possible to correct the two anyon types (hexagons and triangles in our case) separately, as is usually done for the $D(\ZZ_n)$ models \cite{duclos,anwar,hutter,watson_double}. 
Instead, error correction needs to take the possibility of converting between different anyon types into account. 
We thus pair all oddly-charge anyons of both types in a first round and all remaining charge $2$ anyons of both types in a second round.
Pairings can involve anyons which are of equal or of different type.
The weight for connecting two anyons is defined as the minimal number of generalized Pauli operators needed to create a pair of anyons at their respective positions from the vacuum. 
For equal-type anyons, this will be an error string that crosses an even number of defect lines, while for different-type anyons this will be an error string that crosses an odd number of defect lines. 
This can mean, for instance, that connecting two equal-type anyons can have a large weight despite them being geometrically nearby, if there is a defect line between them.

For a code of linear size $L$ in both dimensions, with periodic boundary conditions and $L$ even, we choose defect lines involving $L/2+1$ qudits, as shown in Fig.~\ref{fig:defects} for $L=20$. 
The logical operators $\tilde{X}_L$ and $\tilde{Z}_L$ then have a distance $L+2$ and $L/2+4$, respectively.

\begin{figure}
 \includegraphics[width=.8\columnwidth]{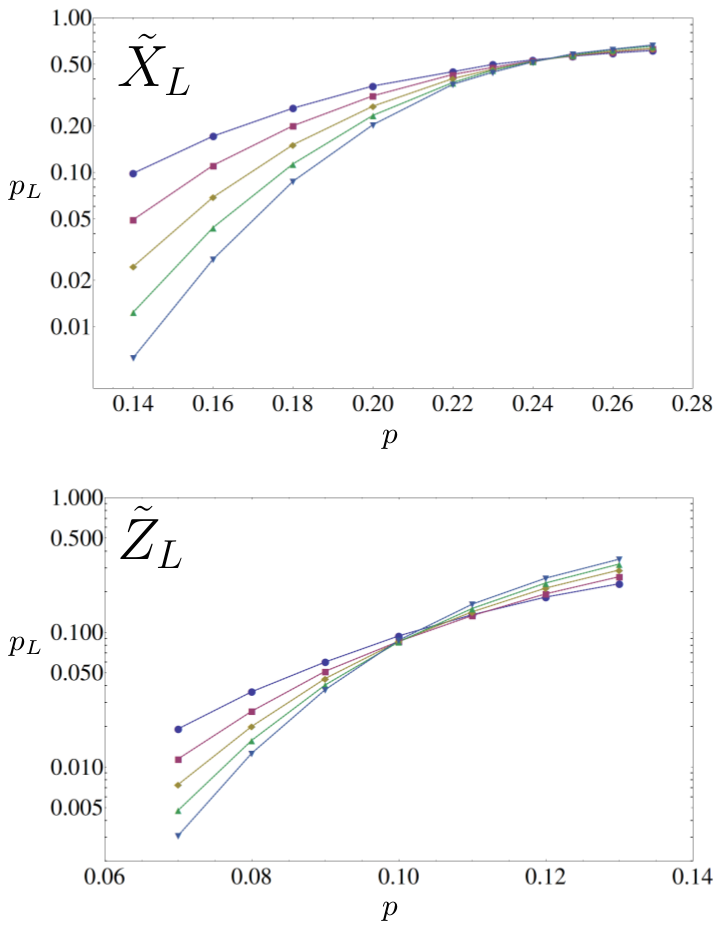}
  \caption{Error rates $p_L$ of the logical operators $\tilde{X}_L$ and $\tilde{Z}_L$ illustrated in Fig.~\ref{fig:defects} as a function of the qubit depolarization rate $p$ for code sizes $L=20,28,36,44,52$. Each data point represents $10^4$ logical errors, such that error bars are negligible. 
  We recognize a threshold error rate $p_c\approx24\%$ for $\tilde{X}_L$ and $p_c\approx10\%$ for $\tilde{Z}_L$.}
  \label{fig:depolThresholdDefect}
\end{figure}

For the depolarizing error model, we find the threshold error rates $p_c$ for both of the logical operators $\tilde{X}_L$ and $\tilde{Z}_L$ given in Fig.~\ref{fig:defects}. 
The results are given in Fig.~\ref{fig:depolThresholdDefect}. For the defect operator $\tilde{X}_L$, we find a threshold error rate $p_c\approx24\%$, as for the operators $\tilde{X}_1$ and $\tilde{X}_2$ in the defect-free case (Figs.~\ref{fig:operators} and \ref{fig:depolThreshold}), while for the defect operator $\tilde{Z}_L$ we find a threshold error rate $p_c\approx10\%$, as for the operators $\tilde{Z}_1$ and $\tilde{Z}_2$ in the defect-free case.

The fact that these values coincide with the defect free case is not unexpected. The introduction of the defects essentially corresponds to a change in the boundary conditions. However, the vast majority of errors have large support within the bulk. The value of the threshold is therefore dominated by bulk effects rather than boundary effects.

\begin{figure}
 \includegraphics[width=.8\columnwidth]{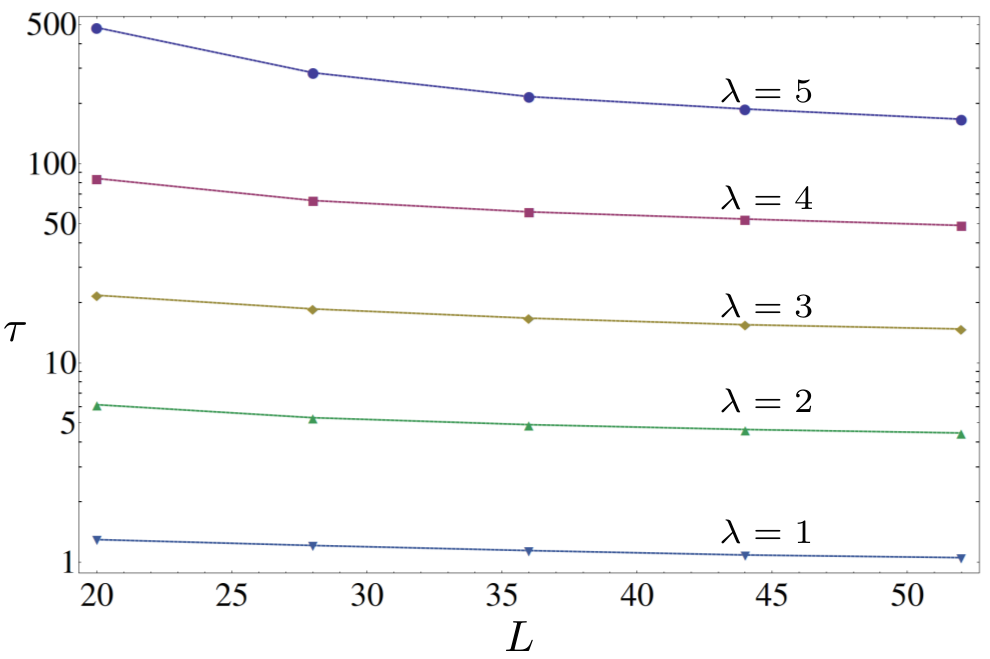}
  \caption{Average lifetimes $\tau$ of the logical qudit stored in the defect logical operators $\tilde{X}$ and $\tilde{Z}$ illustrated in Fig.~\ref{fig:defects} as a function of code size $L$ for $\lambda=1,2,3,4,5$. Each data point represents $10^4$ experiments. The lifetime is defined as the number of Metropolis steps divided by the number of spins in the code.}
  \label{fig:thermalLifetimesDefect}
\end{figure}

Fig.~\ref{fig:thermalLifetimesDefect} shows the average lifetime of the qudit stored in the synthetic topological degeneracy in Fig.~\ref{fig:defects} for the thermal error model. We note that for a given parameter $\lambda$, the asymptotic lifetimes ($L\ra\infty$) are close to those in the defect-free case given in Fig.~\ref{fig:thermalLifetimes}.

\section{Conclusions}\label{sec:conclusions}

We have proposed a system which, on the physical level, involves only nearest-neighbor two-qubit interactions, allows one to perform all Clifford gates through quasi-particle braiding, and has a well-understood error correction problem. 

We have greatly benefitted from the fact that our non-Abelian system is built on top of a system whose excitations correspond to an Abelian anyon model. This allows us to perform the logical operators $X$, $Z$, and $\Lambda$ through quasi-particle braiding. 
It also makes our error correction problem manageable, despite some subtleties such as the fact that single-spin Pauli operators generate superpositions between different syndrome outcomes and the ability to convert between different anyon species during error correction.

Universal quantum computation requires the ability to perform non-Clifford gates, such as ``small-angle'' unitaries. While it is not difficult to perform a non-Clifford operation by non-topological means in our system, this abandons fault-tolerance.
The technique of magic state-distillation \cite{bravyi_magic} is typically used to restore fault-tolerance. 
While research on magic state distillation has so far focused on prime qudit dimensions $d$ \cite{campbell_magic,campbell}, qudit codes with the right transversality properties to perform magic state distillation in non-prime dimensions, including $d=4$, also exist \cite{watson}.
Unfortunately, for non-prime $d$ it is not known whether Clifford gates plus an arbitrary non-Clifford gate are sufficient to achieve universality \cite{campbell_commu}.
It is our hope that our work fuels interest in the $d=4$ case, being a power of $2$ and thus allowing to employ qubits, as demonstrated in our work, while being the smallest power of $2$ that allows one to go beyond the Ising/Majorana case.

Alternatively, one could imagine energetically penalizing one of the degrees of freedom of a two-qubit Hilbert space to obtain a synthetic qutrit ($d=3$).
Magic state distillation for qutrits is well-studied \cite{campbell_qutrit}, potentially allowing to perform fault-tolerant universal quantum computation with $\ZZ_3$ parafermions in a qubit system \cite{comment}.

The authors thank M.~Barkeshli for elaborations on the development of the idea of generating non-Abelian defects in topological systems.
This work was supported by the SNF, NCCR QSIT, and IARPA.

\newpage\quad\newpage
\appendix

\section{Sixth-order degenerate perturbation theory}\label{app:perturbation}
For our perturbation theory, we employ a Schrieffer-Wolff transformation \cite{schrieffer}, as formalized in Ref.~\cite{bravyiSW}.

Consider an unperturbed Hamiltonian $H_0$ whose spectrum can be separated into a low- and a high-energy subspace, which are energetically separated by a gap. 
Given a perturbation $V$, we want to find an effective Hamiltonian $\heff$ describing the ``effective'' physics on the low-energy subspace.
The effective Hamiltonian can be developed in a perturbative series
\begin{align}
 \heff = \heff^{(0)} + \heff^{(1)} + \heff^{(2)} + \ldots
\end{align}
in powers of some small expansion parameter.

Let $P$ denote the projector onto the low-energy subspace and $Q=\id-P$ the projector onto the high-energy subspace.
We define $\vd=PVP+QVQ$ and $\vod=V-\vd=PVQ+QVP$.
For some operator $A$, we define the superoperator $\hat{A}$ via $\hat{A}(O)=[A,O]$.
Let $H_0=\sum_iE_i|i\rangle\langle i|$ be the spectral decomposition of $H_0$ and define the superoperator $\LL$ via
\begin{align}
 \LL(O) = \sum_{i,j}\frac{\langle i|QOP|j\rangle}{E_i-E_j}|i\rangle\langle j| - \hc\,.
\end{align}
We employ the convention that unless indicated otherwise by use of brackets, a superoperator $\LL$ acts on all operators to its right.

For the sixth-order effective Hamiltonian, one derives from Ref.~\cite{bravyiSW} the expression
\begin{widetext}
\begin{align}\label{eq:heff6}
 \heff^{(6)} = \half P\hat{S}_5(\vod)P -\frac{1}{24}P(\hat{S}_1^2\hat{S}_3+\hat{S}_1\hat{S}_3\hat{S}_1+\hat{S}_3\hat{S}_1^2+\hat{S}_2^2\hat{S}_1+\hat{S}_2\hat{S}_1\hat{S}_2+\hat{S}_1\hat{S}_2^2)(\vod)P + \frac{1}{240}P\hat{S}_1^5(\vod)P\,,
\end{align}
\end{widetext}
where
\begin{align}\label{eq:S}
 S_1 &= \LL(\vod) \nn\\
 S_2 &= -\LL\hvd(S_1) \nn\\
 S_3 &= -\LL\hvd(S_2)+\frac{1}{3}\LL\hat{S}_1^3(\vod) \nn\\
 S_4 &= -\LL\hvd(S_3)+\frac{1}{3}\LL(\hat{S}_1\hat{S}_2+\hat{S}_2\hat{S}_1)(\vod) \nn\\
 S_5 &= -\LL\hvd(S_4)+\frac{1}{3}(\hat{S}_2^2+\hat{S}_1\hat{S}_3+\hat{S}_3\hat{S}_1)(\vod) \nn\\&\quad -\frac{1}{45}\LL\hat{S}_1^4(\vod)\,.
\end{align}

In our case, the low-energy subspace onto which $P$ projects is given by the space in which all triangle operators in Eq.~(\ref{eq:paraham}) have minimal energy, i.e., $Z_aZ_bZ_c\equiv1$ for all triangles $(a,b,c)$.
This subspace is fully degenerate.
The lowest-energetic excitations change the eigenvalue of a stabilizer $Z_aZ_bZ_c$ from $1$ to $\pm i$. Since the eigenvalue of $-(Z_aZ_bZ_c+\hc)$ is thereby changed from $-2$ to $0$, this has an energy cost $\Delta=2J$. Note, however, that stabilizer eigenvalues can only be changed in pairs, such that the gap between the low-energetic (groundstate) subspace and the space of excited states is in fact given by $2\Delta$.

A crucial property of our Hamiltonian is that there is no lower-than-sixth-order perturbation that acts within the groundstate space.
Therefore, we are only interested in terms of the form $P\vod(\vd)^4\vod P$, which allows to greatly simplify the effective Hamiltonian.
Namely, only the first summand in all expressions in Eqs.~(\ref{eq:heff6}) and (\ref{eq:S}) is relevant in our case.
We find
\begin{align}
 \heff^{(6)} = \half P\left[(\LL\hvd)^4(\LL\vod),\vod\right]P\,.
\end{align}
Using now that in our case $\vd P=0$, this can be further simplified to
\begin{align}
 \heff^{(6)} &= \half P\LL\left(\LL\left(\LL\left(\LL\left(\LL\left(\vod\right)\vd\right)\vd\right)\vd\right)\vd\right)\vod P \nn\\
&\quad -\half P\vod(\LL\vd)^4(\LL\vod)P \nn\\
&= -P\vod(\LL\vd)^4(\LL\vod)P\,.
\end{align}

There are $6!=720$ possibilities for applying the six factors $X_rX\mdag_sX_tX\mdag_uX_vX\mdag_w$ around one hexagon which leads the system back to the groundstate.
Table~\ref{tab:processes} lists all possible routes the excitation energy above the groundstate can take, together with their numbers of possibilities. 

\begin{table}
\begin{tabular}{|l|l|}
 \hline
 $0\ra2\Delta\ra2\Delta\ra2\Delta\ra2\Delta\ra2\Delta\ra0$ & 96 \\ 
 \hline
 $0\ra2\Delta\ra4\Delta\ra2\Delta\ra2\Delta\ra2\Delta\ra0$ & 48 \\ 
 \hline
 $0\ra2\Delta\ra2\Delta\ra4\Delta\ra2\Delta\ra2\Delta\ra0$ & 48 \\ 
 \hline
 $0\ra2\Delta\ra2\Delta\ra2\Delta\ra4\Delta\ra2\Delta\ra0$ & 48 \\ 
 \hline
 $0\ra2\Delta\ra4\Delta\ra4\Delta\ra2\Delta\ra2\Delta\ra0$ & 96 \\ 
 \hline
 $0\ra2\Delta\ra2\Delta\ra4\Delta\ra4\Delta\ra2\Delta\ra0$ & 96 \\ 
 \hline
 $0\ra2\Delta\ra4\Delta\ra4\Delta\ra4\Delta\ra2\Delta\ra0$ & 192 \\ 
 \hline
 $0\ra2\Delta\ra4\Delta\ra2\Delta\ra4\Delta\ra2\Delta\ra0$ & 24 \\ 
 \hline
 $0\ra2\Delta\ra4\Delta\ra6\Delta\ra4\Delta\ra2\Delta\ra0$ & 72 \\ 
 \hline
\end{tabular}
\caption{Possible routes the excitation energy above the groundstate can take (left column), together with their respectiv multiplicities (right column). Note that the number of multiplicities adds up to $6!=720$.}
\label{tab:processes}
\end{table}

In conclusion, we find the sixth-order effective Hamiltonian
\begin{align}
 H_{\text{eff}} = -q\frac{h^6}{\Delta^5}(X_rX\mdag_sX_tX\mdag_uX_vX\mdag_w+\hc)\,,
\end{align}
where the dimensionless prefactor
\begin{align}
 q &= \frac{96}{32}+\frac{48}{64}+\frac{48}{64}+\frac{48}{64}+\frac{96}{128}+\frac{96}{128} \nn\\&\quad +\frac{192}{256}+\frac{24}{128}+\frac{72}{384} \nn\\& = \frac{63}{8}
\end{align}
is given by the multiplicities in Table~\ref{tab:processes}, divided by the product of all excitation energies (in multiples of $\Delta$) along the virtual process.

\section{Moving unpaired parafermion modes}\label{app:braiding}

To consider the creation and braiding of unpaired parafermionic modes, we must first decide on the double plaquettes with which we will work. Let us consider those of Fig.~\ref{fig:braid}. To visualize the two parafermion modes within each double plaquette we use light blue circles. The one to the right of a double plaquette $P$ is labelled $P_1$, and that to the left is $P_2$.

Parity operators for $\psi$  modes are defined on pairs of parafermion modes. We are primarily concerned with two types of pairing: those of the two modes within the same double plaquette, and those of two modes from neighbouring double plaquettes. Relevant examples of the latter type are shown in Fig.~\ref{fig:braid} by red, orange and green lines connecting the corresponding modes.

\begin{figure}
\centering
\includegraphics[width=1.0\columnwidth]{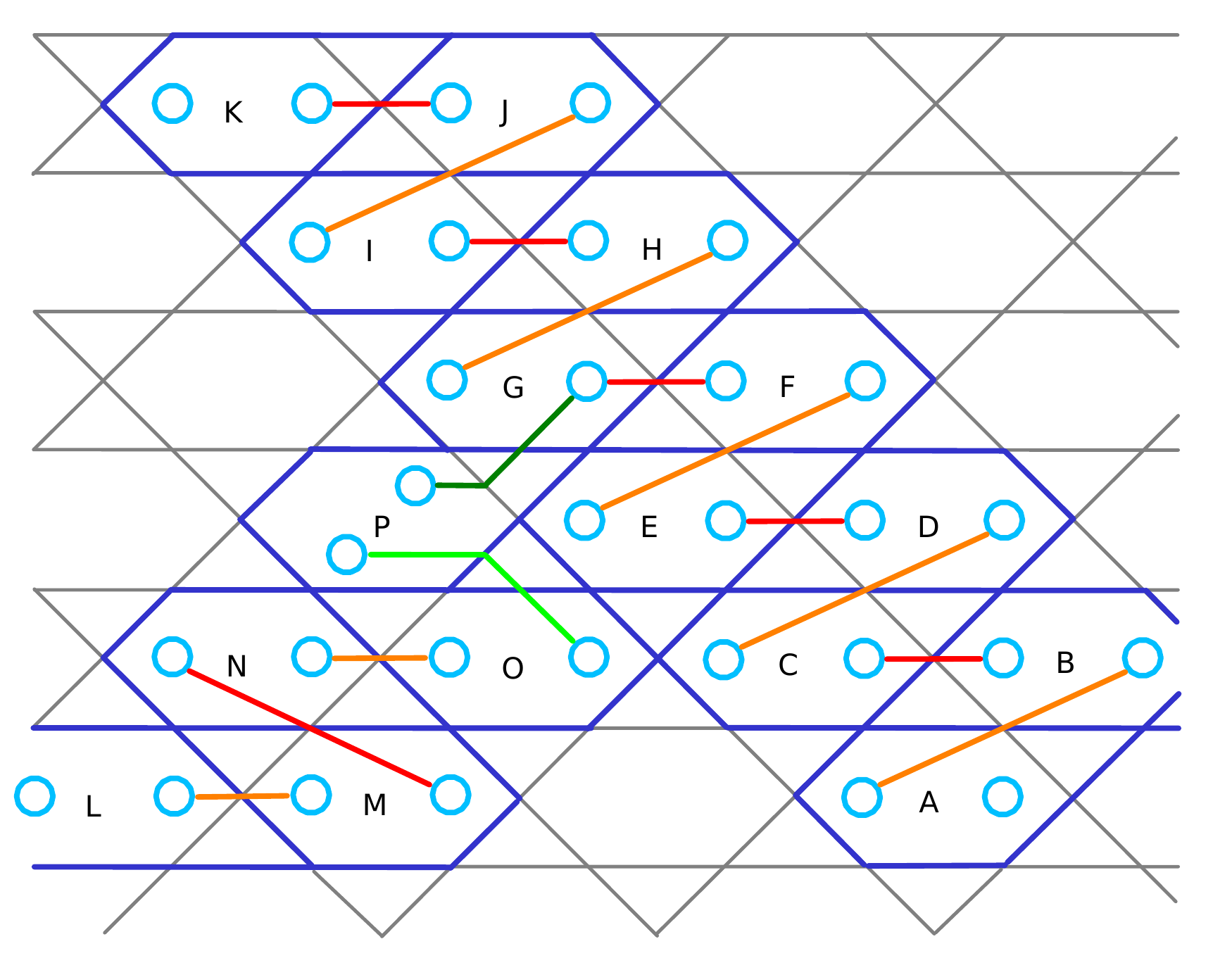}
\caption{A selection of double plaquettes used in a braiding operation. Each double plaquette corresponds to two parafermion modes. For a double plaquette $P$, the parafermion to the right is labelled $P_1$, and that to the left is $P_2$.}
\label{fig:braid}
\end{figure}

For the two modes within each double plaquette, the parity operator $\omega^{5/2} \gamma_{P_1} \gamma\mdag_{P_2}$ corresponds to the stabilizer $S_P$. The orange and red lines connecting modes $P_2$ to $(P+1)_1$ denote the parity operators $ \omega^{5/2} \gamma_{P_2} \gamma\mdag_{(P+1)_1}$. For orange lines, these correspond to the operator $Y$ on the vertex through which the line passes. For red lines they correspond to the operator $X\mdag Z$.

Consider a state initially within the stabilizer space. The parity operators for the pairs of parafermion modes within each double plaquette are therefore part of the stabilizer. The state therefore corresponds to this definite pairing of the modes.

Let us now consider the removal of the operator $S_A$ from the set of stabilizer generators (while $R_A$ remains). The corresponding  parafermion modes are now, in some sense, unpaired. This contributes a factor of four to the ground space degeneracy \cite{note}. However, due to the fact that the `unpaired' parafermions are not well separated, it is not difficult for local perturbations to lift the degeneracy of this space. To become truly unpaired, and benefit from topological protection, they must be separated.

To do this, we can add a term $K(Y+ Y \mdag)$ to the Hamiltonian, which corresponds to the parity operator $ \omega^{5/2} \gamma_{A_2} \gamma\mdag_{B_1}$. This acts on the vertex through which the orange line connecting these modes passes.

For $K \gg J$, this new term will overwhelm the $S_B$ term. The pairing of $B_1$ and $B_2$ will then be broken, and $B_1$ will become paired with $A_2$ instead. The unpaired mode originally at $A_2$ is therefore effectively moved to $B_2$. If the new term is introduced adiabatically, the degenerate subspace associated with the unpaired parafermion modes will remain in the same state during this process.

Similar processes can be used to move the unpaired modes further. The Hamiltonian term $K(X\mdag Z + X \mdag Z)$  corresponding to $ \omega^{5/2} \gamma_{B_2} \gamma\mdag_{A_1}$ can then be used to move the parafermion at $B_2$ to $C_2$, for example. Unpaired parafermion modes can therefore be separated by arbitrary distances, at the endpoints of lines on which single qudit terms are added to the Hamiltonian. In terms of qubits, these correspond to two-body interactions between qubits in the same site.

In order to unlock the potential of parafermions for quantum computation, it must be possible to braid the parafermion modes. Let us consider a specific example of this, using the system of Fig.~\ref{fig:braid}. Consider an initial state within the stabilizer space of all $S_P$ except $A$ and $L$. At these two points, we have the unpaired parafermion modes $A_1$, $A_2$, $L_1$ and $L_2$. Let us now consider operations such as those described above to move $A_1$ and $L_2$ away, beyond the bottom of the figure. All four parafermion modes are then well separated, and so the ground state degeneracy is topological protected.

We will now consider the exchange of $A_2$ with $L_1$. We do this by first moving $A_2$ to $K_2$, then $L_1$ to $A_2$, and finally $K_2$ to $L_1$. The two modes have then swapped places. An exchange of opposite chirality would correspond to first moving $L_1$ to $K_1$, and so on. Note that all modes are kept well separated during the exchange, and so topological protection is always maintained.

During the exchange, the movement of the modes is mostly achieved using Hamiltonian terms that correspond to the red and orange pairings in Fig.~\ref{fig:braid}. These are all single qudit terms. At the junction, however, terms corresponding to the green pairings are used. We must therefore consider these in detail.

For the pairing show by the light green line, the parity operator is $\omega^{5/2}\gamma_{P_2} \gamma\mdag_{O_1}$. This has the effect of creating an $\psi_g$, $\psi_{-g}$ pair on the double plaquettes $O$ and $P$. This requires the two qudit operator $X\mdag_3 X_4 Z\mdag_4$. For the dark green pairing, the $ \omega^{5/2} \gamma_{P_1} \gamma\mdag_{G_2}$ parity operator similarly requires the three qudit operator $\omega^{5/2}X_1 X\mdag_2 Z_2 Z_3$. These terms correspond to four- and six-body quasi-local interactions on the corresponding qubits, respectively. They can be realized by standard methods of perturbative gadgets. However, note that they need only be implemented while an exchange is in progress.

\section{Exchange of two parafermion modes}\label{app:exchange}

To determine the effects of a single clockwise exchange we consider the smallest possible implementation. This involves the double plaquettes labelled $F$, $G$ and $P$ in Fig.~\ref{fig:braid}, which are shown in more detail in Fig.~\ref{fig:braid2}. We consider a state in which the modes at $P_1$ and $F_2$ are unpaired, and those at $G_1$ and $G_2$ are paired by the Hamiltonian term $S_G$. Using this, we determine the effects of exchanging the unpaired modes.
The method used to exchange the two modes is similar to previous methods proposed in order to perform anyon braiding \cite{bonderson,bonderson_interactions}.

\begin{figure}
\centering
\includegraphics[width=1.0\columnwidth]{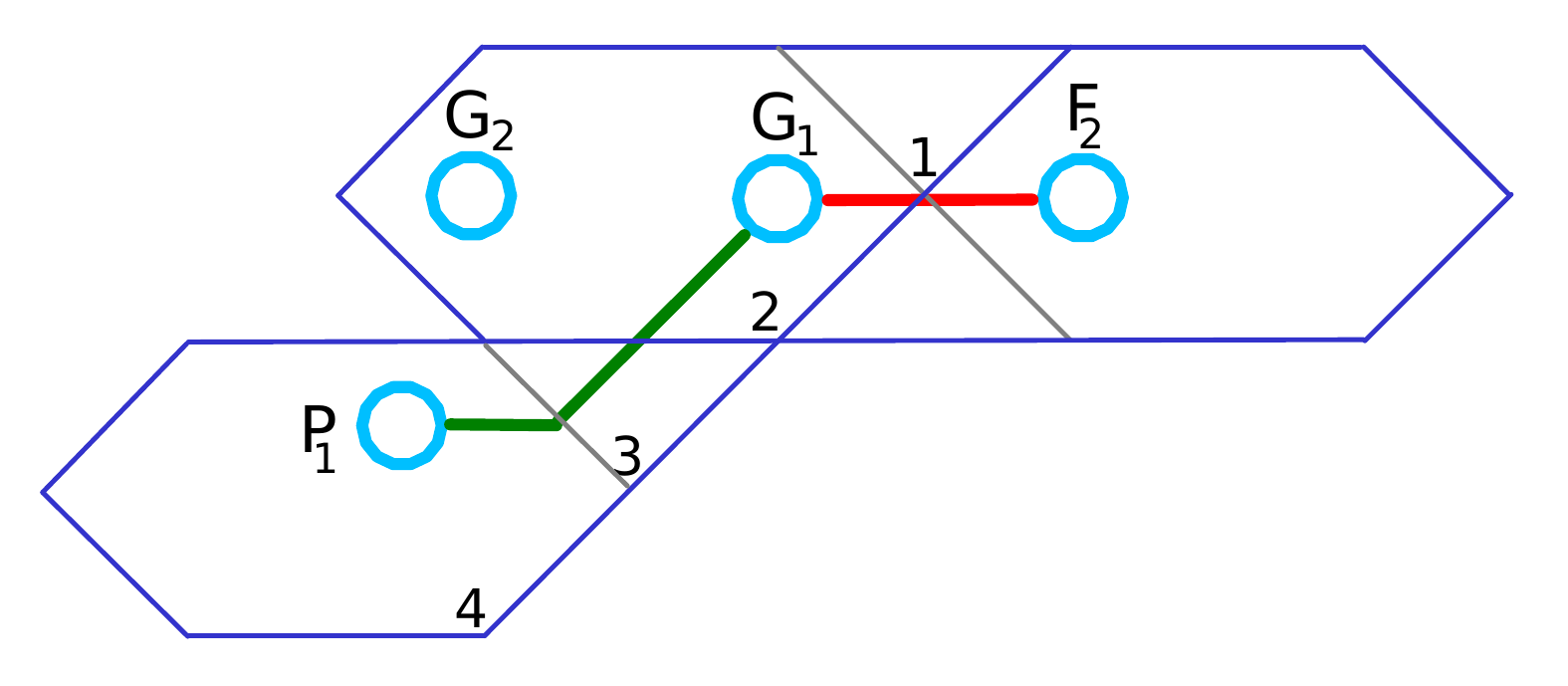}
\caption{Double plaquettes used in the worked example of a single exchange.}
\label{fig:braid2}
\end{figure}

The results of the exchange are most easily understood in terms of the $\psi$ mode formed by this pair. The parity operator, $\Gamma$ for this mode is an operator that creates a $\psi_1$, $\psi_{-1}$ pair and places them in double plaquettes $P$ and $F$, respectively. Also it must have eigenvalues of the form $\omega^g$, and so $\Gamma^4 = 1$. These conditions are satisfied by
\begin{align}
\Gamma = \omega^{(1+2a)/2} Z_1 X\mdag_2 Z_2 Z_3.
\end{align}
We similarly require operations that can move parafermions between the relevant plaquettes. These correspond to the green line between $G$ and $P$ and the red line between $F$ and $G$. These are
\begin{align}
\Pi = \omega^{(1+2b)/2} X_1 X\mdag_2 Z_2 Z_3, \,\,\,  \Phi = \omega^{(1+2c)/2}X\mdag_1 Z_1,
\end{align}
respectively. In these relations $a$, $b$ and $c$ are all elements of $\mathbb{Z}_4$.

We have freedom in choosing the values $a,b,c \in \mathbb{Z}_4$ for these relations. The corresponding freedom also exists for all operators used to move parafermion modes, as well as the logical operators. The values used do not simply correspond to differences in a global phase. Instead they determine which eigenspace of these operators has eigenvalue $\omega^0 = 1$, and so which one corresponds to the vacuum occupancy $\psi_0$ of the $\psi$ mode. These phases therefore cannot be chosen entirely arbitrarily, since the overall conservation constraint of $\psi$ modes must be maintained. However, since here we do not explicitly consider the operations that placed unpaired parafermion modes at $P_1$ and $F_2$, we can assume that their phases are chosen in a way that maintains this conservation. We will therefore consider a free choice of $a$, $b$, and $c$.

The first step in exchanging the parafermions is to move the one at $P_1$ to $G_2$. This is done by adiabatically changing the Hamitonian to one in which the term $\Pi+\Pi\mdag$ is present and stronger than $S_G$. This causes the modes at $G_1$ and $P_1$ to pair, moving the mode once at $P_1$ to $G_2$. The mode at $F_2$ is then moved to that at $P_1$ by adiabatically changing the Hamitonian to one in which the $\Phi+\Phi\mdag$ term is present and stronger than $S_G$, and the $\Pi+\Pi\mdag$ term is removed, pairing $F_2$ with $G_1$. The mode at $G_2$ is then moved to $F_2$ by adiabatically removing the $\Pi+\Pi\mdag$ term and so allowing $S_G$ to become dominant and $G_1$ and $G_2$ to pair. This process then results in the clockwise exchange of the modes.

The first step of this transformation takes a state that is initially in the $\omega^0$ eigenspace of $S_G$ and projects it to one in the $\omega^0$ eigenspace of $\Pi$. The next step projects the state into the $\omega^0$ eigenspace of $\Phi$. The final step projects back into the $\omega^0$ eigenspace of $S_G$. The end effect is then $P_G P_\Phi P_\Pi P_G.$ Here $P_G$ is the projector onto the $\omega^0$ eigenspace of $S_G$, etc. The rightmost $P_G$ simply reflects the fact that the initial state lies within this eigenspace.

The parity operator $\Gamma$ for the pair of unpaired modes commutes with $S_G$, and so can be mutually diagonalized with the above operator. We can therefore interpret its effects in terms of the phase factor assigned to each of the possible $\psi_g$ eigenspaces of the $\psi$ mode of the pair. 

When doing this, different values of $a$, $b$, and $c$ will result in different operations. This may seem to contradict the standard notion of a topological protected operation. However, these differences can be most easily understood by considering movement of parafermion modes implemented by measurement rather than adiabatic Hamiltonian manipulation. This method forces pairing of parafermion modes by measuring the occupancy of their corresponding $\psi$ mode, and so forcing it to have a definite value. Ideally, this measurement will give the vacuum result $\psi_0$. The effect is then the same as the adiabatic manipulation. If a different $\psi_g$ results, it must be removed by fusing it with the unpaired parafermion mode being moved. The different values used for the phases when moving unpaired modes, such as $a$, $b$ and $c$ here, determine how the measurement results are interpreted in terms of $\psi$ anyons, and so determine the net $\psi_g$ fused with the modes being moved. As such, differences in the conventions used for an exchange will change the resulting operation only by a factor of $\Gamma^g$, for some value of $g$ that depends on $a$, $b$, and $c$.

For the standard convention used throughout this paper, with $a=b=c=2$, the phase assigned to a $\psi_g$ occupation by the exchange is $\omega^{g^2/2}\omega^{g(g+1)}$. These are indeed all square roots of $\omega^{g^2}$, as predicted in the main text. However, they are not of the elegant form $\omega^{g^2/2}$ that would be more conducive for the proofs of Appendix \ref{app:clifford}.

For an exchange operation that does have the required form, consider $a=1$ and $b=c=2$. The phase assigned to $\psi_g$ is $\omega^{-g^2/2}$ in this case, up to a global phase of $\omega^{1/2}$. The required phases $\omega^{g^2/2}$ would therefore be obtained from an anticlockwise exchange.

The fact that we obtain the phase $\omega^{-g^2/2}$ for a clockwise exchange in this case means we would get $\omega^{-g^2}$ for a full clockwise monodromy. This would seem to contradict the arguments of the main text, which predict a phase of $\omega^{g^2}$. However, note that these phases differ only by a factor of $\omega^{2g^2}=\omega^{2g}$, and so are equivalent up to a factor of $\Gamma^2$. Since such factors are to be expected for different choices of $a$, $b$ and $c$, this monodromy does not contradict our expectations.
The ambiguity in these phases reflects the fact that the anyon model described by the fusion rules in Eq.~(\ref{eq:fusion_rules}) obeys only projective non-Abelian statistics.

Note that the arguments above do not assume anything about the initial state of the exchanged parafermions. Only their initial positions and the operations used to move them are required. The effect of the braiding is expressed in terms of $\Gamma$, the parity operator for their shared $\psi$ mode, which can be defined for any pair of parafermions. It therefore does not matter what state the fusion space of the parafermions was initially in, and it does not matter whether or not they exist at the end of the same defect line. The effect of the braiding is the same in all cases.

\section{Generators of the Clifford group}\label{app:clifford}

For a tensor product of $n$ $d$-level systems $(\mathbb{C}^d)^{\otimes n}$, the Pauli group $\mathcal{P}_d$ is defined as the group generated by the generalized Pauli operators $X_i$ and $Z_i$, 
and the Clifford group $\mathcal{C}_d$ is defined as the normalizer of $\mathcal{P}_d$ in the unitary group on $(\mathbb{C}^d)^{\otimes n}$. 
That is, elements in $\mathcal{C}_d$ map tensor products of $d$-level Pauli operators to other such tensor products under conjugation.

We start with some general remarks on the action of $\cli$ on $\pa$ for $d=4$.
For $d=4$, an operator $X^aZ^b$ has eigenvalues $\lbrace1\rbrace$ if $a=b=0$, $\lbrace1,-1\rbrace$ if both $a$ and $b$ are even and at least one of them is non-zero, $\lbrace i^{1/2},i^{3/2},i^{5/2},i^{7/2}\rbrace$ if both $a$ and $b$ are odd, 
and $\lbrace 1,i,-1,-i\rbrace$ if $a+b$ is odd. Since the number of distinct eigenvalues is preserved under conjugation, this implies that the Pauli group $\pa$ decays into distinct orbits when $\cli$ acts on it by conjugation. This is in stark contrast to the case where $d$ is an odd prime, which is studied in Ref.~\cite{clark}, where there is only one non-trivial orbit. The orbit containing the elements $X_1$, $Z_1$, $\ldots$, $X_n$, $Z_n$ consists of elements of the form
\begin{align}\label{eq:form}
 \omg^{k+p/2} Z_1^{a_1}X_1^{b_1}\ldots Z_n^{a_n}X_n^{b_n}\,,
\end{align}
where $k,a_1,b_1,\ldots,a_n,b_n\in\ZZ_4$, at least one of the exponents $a_1$, $b_1$, $\ldots$, $a_n$, $b_n$ is odd, and $p=\sum_{i=1}^na_ib_i$ determines whether integer or half-integer powers of $\omg$ appear as phases (we sometimes write $\omg$ for $i$ to avoid confusion with indices).

The following proof is an adaption of the proof in Appendix A of Ref.~\cite{clark}, where it is shown that a certain set of gates generate the Clifford group $\cli$ for the case where $d$ is an odd prime. 
The general structure of our proof is identical to the one in Ref.~\cite{clark}, while the generating set and individual lemmas and their proofs are different.
After completion of this work, we became aware of the more general proof in Ref.~\cite{farinholt}.

Let us define the single-qudit unitaries 
\begin{align}
 H = \frac{1}{2}\sum_{j,k=0}^3\omg^{jk}\ket{j}\bra{k} 
\end{align}
\begin{align}
 S=\sum_{j=0}^3\omg^{j^2/2}\ket{j}\bra{j}\,,
\end{align}
and
\begin{align}
 T&=\half e^{-i\pi/4}\sum_{j,k=0}^3e^{i\frac{\pi}{4}(j-k)^2}\ket{j}\bra{k}\nn\\&=\frac{1}{2}
\begin{pmatrix}
\sqrt{i} & 1 & -\sqrt{i} & 1 \\
1 & \sqrt{i} & 1 & -\sqrt{i} \\
-\sqrt{i} & 1 & \sqrt{i} & 1 \\
1 & -\sqrt{i} & 1 & \sqrt{i}
\end{pmatrix}\,.
\end{align}

\begin{lemma}\label{lem1}
 The gates $S\mdag$, $T\mdag$, $Z\mdag$, $X$, $X\mdag$, and $\sqrt{i}H$ can all be generated from $S$, $T$, and $Z$.
\end{lemma}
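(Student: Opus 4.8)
The plan is to show that each listed gate can be generated from the three primitives $S$, $T$, and $Z$ by exhibiting an explicit word in these generators (together with their inverses, once those inverses are shown to be generable). The key preliminary observation is that $S$, $T$, and $Z$ all have finite order, since they are built from roots of unity acting on a finite-dimensional space. Concretely, $Z=\text{diag}(1,\omg,\omg^2,\omg^3)$ satisfies $Z^4=\id$, so $Z\mdag=Z^3$; and $S=\sum_j\omg^{j^2/2}\ket{j}\bra{j}$ is diagonal with entries that are $16$th roots of unity, so some power $S^m=\id$ gives $S\mdag=S^{m-1}$. Similarly, $T$ is a unitary whose matrix entries lie in $\ZZ[\sqrt{i}]$ and which can be checked to have finite order, yielding $T\mdag$ as a positive power of $T$. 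Thus $S\mdag$, $T\mdag$, and $Z\mdag$ come essentially for free from the finite-order property, and I would verify the exact orders by a short direct computation on the diagonal (for $S$, $Z$) and on the explicit $4\times4$ matrix (for $T$).

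The remaining gates $X$, $X\mdag$, and $\sqrt{i}H$ are the substantive part. The natural route is to first produce the Hadamard-like gate $\sqrt{i}H=\tilde{H}$, using the relation already stated in the main text, namely $\tilde{H}=STS=TST$. Since $S$ and $T$ are generators, the word $STS$ immediately gives $\sqrt{i}H$; I would simply verify the matrix identity $STS=\frac{1}{2}\sum_{jk}\omg^{jk}\sqrt{\omg}\,\ket{j}\bra{k}$ by direct multiplication of the three $4\times4$ matrices, confirming the conjugation action $\tilde{H}X\tilde{H}\mdag=Z$, $\tilde{H}Z\tilde{H}\mdag=X\mdag$ quoted above. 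Once $\tilde{H}$ is in hand, I would obtain $X$ by conjugating $Z$: from $\tilde{H}Z\tilde{H}\mdag=X\mdag$ one gets $X\mdag=\tilde{H}Z\tilde{H}\mdag$ directly, and hence $X=(\tilde{H}Z\tilde{H}\mdag)\mdag=\tilde{H}Z\mdag\tilde{H}\mdag$, both expressed entirely in terms of $S$, $T$, $Z$ and the already-established inverses.

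The main obstacle I anticipate is pinning down the exact finite orders of $S$ and especially $T$, since the phases involve $\sqrt{i}=e^{i\pi/4}$ and half-integer powers of $\omg$, so the relevant root of unity is $16$th-order rather than $4$th-order; a careless order count could give a wrong power for the inverse. For $S$ this is straightforward since it is diagonal: the entries are $\omg^{j^2/2}=e^{i\pi j^2/4}$ for $j\in\{0,1,2,3\}$, i.e.\ $1, e^{i\pi/4}, 1, e^{i\pi/4}$, so $S^8=\id$ and $S\mdag=S^7$. For $T$ the cleanest approach is to use its action on the Pauli operators rather than compute a high matrix power blindly: one checks how $T$ conjugates $X$ and $Z$, recognizes $T$ as a Clifford gate of determinate order within the (finite) single-qudit Clifford group modulo phases, and then fixes the residual phase to determine the precise exponent giving $T\mdag$. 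This reduces the hardest step to a finite, mechanical verification on two generators rather than a brute-force search, and it dovetails with the conjugation relations that the later lemmas will need anyway.
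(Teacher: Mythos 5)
Your proposal follows essentially the same route as the paper's proof: inverses are obtained from the finite orders $S^8=T^8=\id$ and $Z^4=\id$, the Hadamard-like gate from the word $\sqrt{i}H=STS=TST$, and $X$, $X\mdag$ by conjugating $Z$ (equivalently $Z\mdag$) with this gate, so the argument is correct and structurally identical. One small arithmetic slip worth fixing: the diagonal entries of $S$ are $1,\,e^{i\pi/4},\,-1,\,e^{i\pi/4}$ (the $j=2$ entry is $\omg^{2}=-1$, not $1$), though this does not affect the conclusion $S^8=\id$ and hence $S\mdag=S^7$.
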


\begin{proof}
 As $S^8=T^8=\id$, we have $S\mdag=S^7$ and $T\mdag=T^7$. 
 We have $\sqrt{i}H=STS=TST$ and $X=H\mdag ZH=(\sqrt{i}H)\mdag Z(\sqrt{i}H)$.
 Finally, $X^4=Z^4=\id$, so $X\mdag=X^3$ and $Z\mdag=Z^3$.
\end{proof}

If for some Clifford gate $U\in\cli$ we have
\begin{align}
 U(Z_1^{a_1}X_1^{b_1}\ldots Z_n^{a_n}X_n^{b_n})U\mdag = \alpha Z_1^{a_1'}X_1^{b_1'}\ldots Z_n^{a_n'}X_n^{b_n'}\,, 
\end{align}
with $|\alpha|=1$, we write 
\begin{align}
 M(U)(a_1,b_1,\ldots,a_n,b_n)^T=(a_1',b_1',\ldots,a_n',b_n')^T\,.
\end{align}
The matrices $M(U)\in\ZZ_d^{2n\times2n}$ form a representation of $\cli$, as $M(UV)=M(U)M(V)$.

\begin{lemma}\label{lem2}
 The gates $S$, $T$, and $Z$ generate the entire single-qudit Clifford group $\CC^{\otimes 1}_4$.
\end{lemma}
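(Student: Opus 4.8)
The plan is to analyze the homomorphism $U\mapsto M(U)$ introduced above, which sends a single-qudit Clifford gate to the $2\times2$ matrix over $\ZZ_4$ describing its action on the exponents of $Z^aX^b$. Since conjugation preserves the commutator phase $\omega^{ab'-a'b}$ of two Paulis, every $M(U)$ preserves the symplectic form on $\ZZ_4^2$, and for a $2\times2$ matrix this forces $\det M(U)=1$, so $M$ maps $\CC_4^{\otimes1}$ into $\mathrm{SL}(2,\ZZ_4)$. The proof then splits into three parts organized around the exact sequence $1\to\ker M\to\CC_4^{\otimes1}\xrightarrow{M}\mathrm{im}\,M\to1$: (i) show that $S$ and $T$ surject onto $\mathrm{SL}(2,\ZZ_4)$ under $M$; (ii) show that $\langle S,T,Z\rangle$ contains all of $\ker M$; and (iii) assemble these to recover an arbitrary Clifford gate.

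First I would compute $M$ on the three generators. Since $Z$ is itself a Pauli, conjugation by it only produces phases, so $M(Z)=\id$; thus $Z$ contributes nothing to the image but will be used to generate the kernel. Using $SZS\mdag=Z$ and $SXS\mdag\propto ZX$ one finds $M(S)=\begin{pmatrix}1&1\\0&1\end{pmatrix}$, and since $T$ is diagonal in the $X$ basis one similarly gets the lower transvection $M(T)=\begin{pmatrix}1&0\\1&1\end{pmatrix}$. These two elementary transvections generate $\mathrm{SL}(2,\ZZ)$, and because reduction modulo $4$ gives a surjection $\mathrm{SL}(2,\ZZ)\to\mathrm{SL}(2,\ZZ_4)$, their images generate all of $\mathrm{SL}(2,\ZZ_4)$. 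Combined with the symplectic constraint above, this shows $M(\langle S,T\rangle)=\mathrm{SL}(2,\ZZ_4)=\mathrm{im}\,M$. I would note that, in contrast to the odd-prime case of Ref.~\cite{clark}, one does not work with a single transitive orbit on all nontrivial Paulis; instead $\mathrm{SL}(2,\ZZ_4)$ acts transitively on the \emph{primitive} vectors (those with an odd coordinate), which is exactly the orbit of $X_1,Z_1,\ldots$ described by Eq.~(\ref{eq:form}).

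Next I would identify $\ker M$. If $M(U)=\id$ then $UXU\mdag=\alpha X$ and $UZU\mdag=\beta Z$ with $\alpha^4=\beta^4=1$; comparing with the conjugation action of the Pauli $Z^{a_0}X^{b_0}$, which multiplies $X$ by $\omega^{a_0}$ and $Z$ by $\omega^{-b_0}$, one concludes that $\ker M$ consists precisely of scalar multiples of Pauli operators. These are generated: $Z$ is a generator, and by Lemma~\ref{lem1} we have $\sqrt{i}H=STS\in\langle S,T\rangle$, whence $X=(\sqrt{i}H)\mdag Z(\sqrt{i}H)$, so every $Z^aX^b$ lies in $\langle S,T,Z\rangle$. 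The remaining scalar factors are handled by tracking the explicit phases carried by the generators, e.g.\ $(\sqrt{i}H)^4=-\id$ and $S^8=\id$, together with the diagonal phases of $S$ and $Z$; here the statement is understood modulo global phase, or with the phases restricted to the eighth roots of unity $\omega^{k+p/2}$ appearing in Eq.~(\ref{eq:form}).

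Finally, assembling the pieces: given any $U\in\CC_4^{\otimes1}$, the matrix $M(U)\in\mathrm{SL}(2,\ZZ_4)$ equals $M(V)$ for some word $V\in\langle S,T\rangle$, so $UV\mdag\in\ker M$ is a scalar multiple of a Pauli and hence lies in $\langle S,T,Z\rangle$; therefore $U\in\langle S,T,Z\rangle$, giving the lemma. I expect the main obstacle to be the scalar/phase bookkeeping rather than the group-theoretic core: because $d=4$ is even, half-integer powers of $\omega$ enter, and one must check carefully that the generated phase subgroup matches the one admitted by the definition of $\CC_4^{\otimes1}$. The transvection and kernel arguments themselves are routine; it is precisely the even-dimensional phase structure that distinguishes this case from the odd-prime treatment of Ref.~\cite{clark} and must be verified explicitly.
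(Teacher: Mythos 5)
Your proposal is correct, and while it shares the paper's skeleton---analyze the representation $U\mapsto M(U)$, note that preserving the commutation relations forces $\det M(U)=1 \text{ (mod 4)}$, realize all such matrices by words in $S$ and $T$, then handle Paulis and phases---you execute both key steps by genuinely different means. For surjectivity onto $\mathrm{SL}(2,\ZZ_4)$ the paper checks by brute force that products of at most $9$ factors $M(S)$, $M(T)$ exhaust the $48$ determinant-one matrices; you instead note that $M(S)$ and $M(T)$ are (up to inverses) the elementary transvections, which generate $\mathrm{SL}(2,\ZZ)$, and that reduction mod $4$ is surjective. Your route needs no enumeration and works verbatim for any modulus, whereas the paper's computation stays self-contained and yields an explicit word-length bound, which has practical value for compiling gates. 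For the remainder, the paper adjusts the phases of the conjugated Paulis using $XZX\mdag=\bar{\omg}Z$ and $ZXZ\mdag=\omg X$, while you explicitly identify $\ker M$ as the scalar multiples of Pauli operators (a Schur-type argument) and generate the Paulis via Lemma~\ref{lem1}; these are equivalent in content---the nondegeneracy of the commutator pairing is what makes the paper's phase adjustment work---but your exact-sequence organization makes the logic more transparent. Both your proof and the paper's establish the lemma only up to global phase, a caveat you rightly flag and which is unavoidable since $\langle S,T,Z\rangle$ is countable while the unitary normalizer contains all of $U(1)$. Two remarks on the matrices: your $M(S)=\bigl(\begin{smallmatrix}1&1\\0&1\end{smallmatrix}\bigr)$ is the correct one under the paper's convention, and the matrices printed in the paper, $\bigl(\begin{smallmatrix}0&1\\1&1\end{smallmatrix}\bigr)$ and $\bigl(\begin{smallmatrix}-1&1\\1&0\end{smallmatrix}\bigr)$, actually have determinant $-1 \text{ (mod 4)}$, contradicting the constraint stated in the same proof---evidently a slip there, so do not be troubled by the mismatch; conversely, your own $M(T)$ has a sign slip, since $TZT\mdag=\sqrt{\omg}ZX\mdag$ gives $M(T)=\bigl(\begin{smallmatrix}1&0\\-1&1\end{smallmatrix}\bigr)$ rather than $\bigl(\begin{smallmatrix}1&0\\1&1\end{smallmatrix}\bigr)$, but this is harmless because a group containing a transvection contains its inverse, so your generation argument is unaffected.
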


\begin{proof}
For some $U\in\CC^{\otimes 1}_4$, let $M(U)=\bigl(\begin{smallmatrix}a&c\\b&d\end{smallmatrix}\bigr)$.
Preserving the commutation relations of the single-qudit Pauli operators requires that $ad-bc=1\text{ (mod 4)}$.
One verifies that there are only $48$ matrices $M$ in the matrix ring $\ZZ_4^{2\times2}$ satisfying the requirement $\det M=1 \text{ (mod 4)}$.
We have $SXS\mdag=\sqrt{\omg}XZ$ and $TZT\mdag=\sqrt{\omg}ZX\mdag$, such that $M(S)=\bigl(\begin{smallmatrix}0&1\\1&1\end{smallmatrix}\bigr)$ and $M(T)=\bigl(\begin{smallmatrix}-1&1\\1&0\end{smallmatrix}\bigr)$.
Once can verify by brute force that products of at most $9$ factors $M(S)$ and $M(T)$ generate all of the aforementioned $48$ matrices.
Finally, since $XZX\mdag=\bar{\omg}Z$ and $ZXZ\mdag=\omg X$, we can generate arbitrary phases compatible with Eq.~(\ref{eq:form}) (for $n=1$).
\end{proof}

We define the controlled Pauli-operators 
\begin{align}
 C_X = \sum_{j=0}^3\ket{j}\bra{j}\otimes X^j
\end{align}
and
\begin{align}
 C_Z = \sum_{j=0}^3\ket{j}\bra{j}\otimes Z^j = \sum_{j,k=0}^3\omg^{jk}\ket{j}\bra{j}\otimes\ket{k}\bra{k}\,. 
\end{align}
Note that $C_Z$ has been called $\Lambda$ in the main part of this work.
We write $A\mapsto_UB$ as a shorthand for $UAU\mdag=B$.

We have
\begin{align}
 &Z_1\mapsto_{C_X}Z_1 \nn\\ &X_1\mapsto_{C_X}X_1X_2 \nn\\ &Z_2\mapsto_{C_X}Z_1\mdag Z_2 \nn\\ &X_2\mapsto_{C_X}X_2
\end{align}
and 
\begin{align}
 &Z_1\mapsto_{C_Z}Z_1 \nn\\ &X_1\mapsto_{C_Z}X_1Z_2 \nn\\ &Z_2\mapsto_{C_Z}Z_2 \nn\\ &X_2\mapsto_{C_Z}Z_1X_2\,,
\end{align}
showing that $C_X, C_Z\in\CC^{\otimes2}_4$.

\begin{lemma}\label{lem3}
 The gate $C_X$ can be generated from $S$, $T$, and $C_Z$.
\end{lemma}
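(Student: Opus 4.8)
The plan is to realize $C_X$ by conjugating $C_Z$ with a Hadamard gate on the target qudit, mirroring the familiar qubit identity $\mathrm{CNOT}=(\id\otimes H)\,C_Z\,(\id\otimes H)$. Concretely, I expect the operator identity
\begin{align}
 C_X = (\id\otimes H\mdag)\,C_Z\,(\id\otimes H)\,,
\end{align}
where $H$ acts on the target qudit (qudit $2$). Since Lemma~\ref{lem1} shows that $\sqrt{i}H=STS$ is generated from $S$ and $T$, the single-qudit gate $H$ (and $H\mdag$) on qudit $2$ is available up to the phase $e^{-i\pi/4}$, and $C_Z=\Lambda$ is given. The only thing left to establish is the displayed identity.

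To verify it, I would track the action on the four Pauli generators $Z_1,X_1,Z_2,X_2$ through the three conjugations (innermost $\id\otimes H$ first, then $C_Z$, then $\id\otimes H\mdag$), using the $C_Z$-action listed in the main text together with the Hadamard rules $HXH\mdag=Z$, $HZH\mdag=X\mdag$ (hence $H\mdag XH=Z\mdag$, $H\mdag ZH=X$) derived from $\tilde{H}=\sqrt{i}H$. For instance, $X_1$ is untouched by $\id\otimes H$, becomes $X_1Z_2$ under $C_Z$, and then $Z_2\mapsto X_2$ under $\id\otimes H\mdag$, giving $X_1X_2$; similarly one checks $Z_1\mapsto Z_1$, $Z_2\mapsto Z_1\mdag Z_2$, and $X_2\mapsto X_2$. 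These coincide exactly with the tabulated action of $C_X$, so the two Clifford gates act identically on $\pa$ by conjugation and therefore differ by at most a global phase.

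Finally I would pin down that this phase is trivial. Writing $\id\otimes H=e^{-i\pi/4}(\id\otimes STS)$ and $\id\otimes H\mdag=e^{i\pi/4}(\id\otimes S\mdag T\mdag S\mdag)$, the two phases cancel in the product, so
\begin{align}
 C_X = (\id\otimes S\mdag T\mdag S\mdag)\,C_Z\,(\id\otimes STS)
\end{align}
holds exactly, exhibiting $C_X$ as a product of $S$, $T$, $S\mdag$, $T\mdag$ (the latter two from Lemma~\ref{lem1}) and $C_Z$. The main obstacle is bookkeeping rather than conceptual: keeping the daggers and the inner/outer conjugation order consistent, and confirming that the $e^{\pm i\pi/4}$ factors from the two Hadamards cancel so that no stray global phase survives. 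Once those are handled, the identity is routine.
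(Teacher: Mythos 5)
Your proposal is correct and takes essentially the same route as the paper: both realize $C_X = H_2\mdag C_Z H_2 = (\sqrt{i}H_2)\mdag C_Z (\sqrt{i}H_2)$, noting that the $\sqrt{i}$ phases cancel under conjugation so that Lemma~\ref{lem1} supplies $\sqrt{i}H = STS$. One minor tightening: your Pauli-conjugation check only determines the product up to a global phase, and the $e^{\pm i\pi/4}$ cancellation fixes only the phase between conjugating by $H$ and by $STS$ (not the phase relative to $C_X$ itself); the exact equality is obtained directly by applying the exact relation $H\mdag Z H = X$ termwise to $C_Z = \sum_{j}\ket{j}\bra{j}\otimes Z^j$.
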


\begin{proof}
We note that
\begin{align}\label{eq:hadamard}
 HXH\mdag=Z \qquad\text{and}\qquad HZH\mdag=X\mdag\,.
\end{align}
Thus
 \begin{align}
  C_X = H_2\mdag C_Z H_2 = (\sqrt{i}H_2)\mdag C_Z (\sqrt{i}H_2)\,,
 \end{align}
which together with Lemma~\ref{lem1} completes the proof.
\end{proof}

Let us define a more general controlled operator as
\begin{align}
 C_{st} = S_1^{-st}(C_X)^s(C_Z)^t\,.
\end{align}
It acts by conjugation as
\begin{align}\label{eq:control}
 &Z_1\mapsto_{C_{st}}Z_1 \nn\\ &X_1\mapsto_{C_{st}}\omg^{st/2}X_1X_2^sZ_2^t \nn\\ &Z_2\mapsto_{C_{st}}Z_1^{-s}Z_2 \nn\\ &X_2\mapsto_{C_{st}}Z_1^tX_2\,.
\end{align}
Up to the phase $\omg^{st/2}$, this action is identical to the one of the conditional Pauli gate $C_{X^sZ^t}$ studied in Ref.~\cite{clark}. We point out again that such a phase is unavoidable for $\ZZ_4$, as there is, for instance, no unitary $U$ such that $X_1\mapsto_UX_1X_2Z_2$, since these two operators are not isospectral.

Let us define the SWAP gate $\mathcal{S}$ via $\mathcal{S}\ket{j}\ket{k}=\ket{k}\ket{j}$. Evidently, it acts as
\begin{align}
 X_1\mapsto_\mathcal{S}X_2\,,\quad Z_1\mapsto_\mathcal{S}Z_2\,,\quad X_2\mapsto_\mathcal{S}X_1\,,\quad Z_2\mapsto_\mathcal{S}Z_1\,.
\end{align}
The gate $\mathcal{S}$ thus allows to generate non-local entangling gates from nearest-neighbor ones.

\begin{lemma}\label{lem4}
 The gate $i\mathcal{S}$ can be generated from $S$, $T$, and $C_Z$.
\end{lemma}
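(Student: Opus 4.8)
The plan is to reproduce the standard three–controlled-NOT decomposition of the swap, adapted to qudits, and then to account carefully for the global phase that distinguishes $\mathcal{S}$ from $i\mathcal{S}$. Besides the gate $C_X$ (control on the first qudit), which is available by Lemma~\ref{lem3}, I would need the control-reversed gate $\tilde{C}_X$ acting as $\ket{x,y}\mapsto\ket{x+y,y}$ on the computational basis. Since $C_Z$ is symmetric under exchange of the two qudits, the argument of Lemma~\ref{lem3} applies verbatim with the roles of the qudits interchanged: conjugating $C_Z$ by a Hadamard on the first qudit gives $\tilde{C}_X=H_1\mdag C_Z H_1=(\sqrt{i}H_1)\mdag C_Z(\sqrt{i}H_1)$, which by Lemmas~\ref{lem1} and~\ref{lem3} is generated from $S$, $T$, and $C_Z$. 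A short computation in the computational basis (the $\sqrt{i}$ phases cancel under conjugation) confirms $\tilde C_X\ket{x,y}=\ket{x+y,y}$ with no residual phase.

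Next I would write down the explicit circuit identity. Tracking a basis state through $C_X$, then $\tilde{C}_X\mdag$, then $C_X$ gives
\begin{align}
\ket{x,y}\mapsto\ket{x,x+y}\mapsto\ket{-y,x+y}\mapsto\ket{-y,x}\,,
\end{align}
so the product $C_X\,\tilde C_X\mdag\,C_X$ equals $\mathcal{S}$ composed with the negation $\ket{a}\mapsto\ket{-a}$ on the first qudit. Since $H^2=\sum_a\ket{-a}\bra{a}$ is exactly this negation and satisfies $(H^2)^2=\id$, this yields the operator identity $\mathcal{S}=H_1^2\,C_X\,\tilde C_X\mdag\,C_X$. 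Crucially every factor here is a genuine permutation matrix with $0/1$ entries, so the identity holds on the nose with no hidden phases. Because $X^4=\id$ makes $\tilde C_X$ of order four, $\tilde C_X\mdag=\tilde C_X^3$ is again generated.

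The one remaining, and main, subtlety is the global phase, which I expect to be the only place requiring genuine care since everything else reduces to permutation-matrix bookkeeping. The single-qudit group generated by $S$, $T$, and $Z$ contains $\sqrt{i}H$ but not $H$ itself, and $(\sqrt{i}H)^2=iH^2$. Replacing $H_1^2$ by the available $(\sqrt{i}H_1)^2=iH_1^2$ in the identity above multiplies the right-hand side by $i$, giving precisely
\begin{align}
i\mathcal{S}=(\sqrt{i}H_1)^2\,C_X\,\tilde C_X^3\,C_X\,,
\end{align}
with $\sqrt{i}H_1=S_1T_1S_1$ by Lemma~\ref{lem1} and with $C_X$, $\tilde C_X$ built from $S$, $T$, and $C_Z$. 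This also explains the form of the statement: the factor of $i$ is forced, as $H$ itself lies outside the generated group while $\sqrt{i}H$ does not, so the natural construction produces $i\mathcal{S}$ rather than $\mathcal{S}$.
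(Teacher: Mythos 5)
Your construction is correct and is essentially the paper's own proof: the paper likewise uses the three-gate identity $C_{X(1,2)}C_{X(2,1)}\mdag C_{X(1,2)}(\sqrt{i}H_2)^2 = i\mathcal{S}$, with the reverse-control gate obtained exactly as you do (conjugating $C_Z$ by $\sqrt{i}H$ on the other qudit, which works because $C_Z$ is symmetric), and the factor of $i$ arising from $(\sqrt{i}H)^2=iH^2$. Your variant merely places the negation correction $(\sqrt{i}H_1)^2$ on the first qudit at the end rather than on the second qudit at the start, which is an inessential rearrangement of the same identity.
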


Note that the gates $\mathcal{S}$ and $i\mathcal{S}$ act identically by conjugation.

\begin{proof}
Let 
\begin{align}
 C_{X(1,2)}=\sum_{j=0}^3\ket{j}\bra{j}\otimes X^j\,,\quad C_{X(2,1)}=\sum_{j=0}^3X^j\otimes\ket{j}\bra{j}\,.
\end{align}
One verifies that 
\begin{align}
 C_{X(1,2)}C_{X(2,1)}\mdag C_{X(1,2)}(\sqrt{i}H_2)^2 = i\mathcal{S}\,,
\end{align}
which together with Lemmas \ref{lem1} and \ref{lem3} completes the proof.
\end{proof}

Let
\begin{align}\label{eq:PQ}
  P&=\alpha_PZ_1^{a_1}X_1^{b_1}\ldots Z_n^{a_n}X_n^{b_n} \nn\\ Q&=\alpha_QZ_1^{c_1}X_1^{d_1}\ldots Z_n^{c_n}X_n^{d_n}\,.
\end{align}
All arithmetics involving the exponents $a_j$, $b_j$, $c_j$, and $d_j$ that follow are to be understood modulo $4$.
It follows from the commutation relation $ZX=\omg XZ$ that $PQ = \omg^{(P,Q)}QP$ where
\begin{align}
 (P,Q) = \sum_{i=1}^na_id_i-b_ic_i\,.
\end{align}

\begin{lemma}\label{lem5}
 Given $P,Q\in\mathcal{P}_4^{\otimes n}$ with $(P,Q)=1$, we can generate $U\in\mathcal{C}_4^{\otimes n}$ from $S$, $T$, and nearest-neighbor $C_Z$ such that
\begin{align}
 P&\mapsto_U \alpha_PZ^{a_1'}X^{b_1'}\ldots Z^{a_n'}X^{b_b'} \nn\\  Q&\mapsto_U \alpha_QZ^{c_1'}X^{d_1'}\ldots Z^{c_n'}X^{d_n'}\,,
\end{align}
with $|\alpha_P|=|\alpha_Q|=1$,
and there exists $j\in\lbrace1,\ldots,n\rbrace$ such that $a_j'd_j'-b_j'c_j'=1$.
\end{lemma}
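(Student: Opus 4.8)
The plan is to reduce the whole statement to the single task of concentrating all of $P$'s support onto one qudit; once $P$ is supported on a single site, the required index and the value $1$ are handed to us for free by the invariance of the symplectic form. I would first record the $\ZZ_4$-specific fact that $(P,Q)=1$ forces both $P$ and $Q$ to have order $4$. Reducing the symplectic form $(P,Q)=\sum_i(a_id_i-b_ic_i)$ modulo $2$, if every exponent $a_i,b_i$ of $P$ were even then each term would be even and $(P,Q)$ would be even, contradicting $(P,Q)=1$. Hence at least one exponent of $P$ is odd, so $P$ has a site at which its local factor has order $4$; the same argument applies to $Q$. This guarantees the existence of an order-$4$ ``pivot'' site, which is exactly what makes the clearing step below work over the ring $\ZZ_4$.

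Next I would normalise $P$ site by site. By Lemma~\ref{lem2} the full single-qudit Clifford group is available at each vertex, and these gates act on the local symplectic pair $(a_i,b_i)$ by an element of $SL(2,\ZZ_4)$, which preserves the site-wise support pattern and the global invariant $(P,Q)$. Using transitivity of $SL(2,\ZZ_4)$ on primitive vectors (those with at least one odd entry), I would map each order-$4$ local factor of $P$ to $X_i$ and each order-$2$ local factor to $X_i^2$, turning $P$ into a pure-$X$ word $\propto X_s\prod_i X_i^{b_i}$ in which the distinguished order-$4$ site $s$ carries exponent $b_s=1$. Making $P$ purely $X$-type everywhere is essential for the following step, since a controlled-$X$ gate would otherwise transport $Z$-content between sites rather than clear it.

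I would then clear $P$ from every other site using controlled-$X$ gates issued from the pivot $s$. Lemma~\ref{lem3} supplies $C_X$ from $S$, $T$, and $C_Z$, and Lemma~\ref{lem4} supplies the SWAP, so $C_X$ may be applied between $s$ and any target $t$ despite only nearest-neighbour $C_Z$ being assumed. Since $C_X^{k}$ with control $s$ and target $t$ sends $X_s\mapsto X_sX_t^{k}$ while fixing $X_t$, conjugating the pure-$X$ word shifts $P$'s exponent on $t$ by $k\,b_s=k$; as $b_s=1$ is a unit, choosing $k=-b_t$ clears site $t$ without altering $s$ or any previously cleared site. Iterating over all $t\neq s$ yields $P\mapsto_U(\text{unit phase})\,X_s$, with the phase remaining unimodular throughout.

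Finally, taking $U$ to be the product of all the gates above, Clifford conjugation preserves the symplectic form, so $\sum_i(a_i'd_i'-b_i'c_i')=(P,Q)=1$ for the transformed operators. But $P\mapsto_U(\text{phase})\,X_s$ means $a_i'=b_i'=0$ for $P$ at every $i\neq s$, forcing $a_i'd_i'-b_i'c_i'=0$ there; hence the entire symplectic product is carried by site $s$, giving $a_s'd_s'-b_s'c_s'=1$, and $j=s$ is the required index while $\alpha_P,\alpha_Q$ stay of unit modulus. I expect the main obstacle to be the second step: because $\ZZ_4$ is not a field, one must argue carefully about the $SL(2,\ZZ_4)$-orbits of local symplectic vectors and keep the order-$4$ versus order-$2$ distinction straight, which is precisely what the order-$4$ conclusion of the first step renders tractable.
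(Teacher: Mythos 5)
Your proof is correct, but it follows a genuinely different route from the paper's. The paper works directly with the local symplectic determinants $r_i=a_id_i-b_ic_i$: since their sum is odd, some $r_j\in\lbrace 1,-1\rbrace$; if no site gives $+1$, a parity argument produces a companion site $k$ with $r_k=2$ or $r_k=-1$, both sites are brought to explicit canonical forms using Lemma~\ref{lem2}, and a single $C_{Z(j,k)}$ (made non-local via the SWAP of Lemma~\ref{lem4}) shifts $r_j\mapsto r_j+(b_kd_j-b_jd_k)=-1-2=1$. You instead never examine the $r_i$ at all: you use the mod-$2$ consequence of $(P,Q)=1$ to find an order-$4$ pivot site for $P$, invoke transitivity of $SL(2,\ZZ_4)$ on primitive vectors to make $P$ a pure-$X$ word with unit exponent at the pivot, clear every other site with powers of $C_X$ issued from the pivot (Lemmas~\ref{lem3} and \ref{lem4}), and then let invariance of the commutation phase under conjugation force the entire symplectic product onto the pivot site. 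Both arguments are sound and rest on the same ingredients (Lemmas~\ref{lem2}--\ref{lem4}); yours trades gate economy (up to $n-1$ entangling gates versus the paper's single $C_Z$) for the elimination of the paper's case analysis ($r_k=2$ versus $r_k=-1$) and of the existence argument for the companion site, and it proves the stronger statement that $P$ itself can be normalized to $\alpha_P X_j$ supported on a single site, which would also streamline the subsequent Lemma~\ref{lem6}. The one fact you cite without proof---transitivity of $SL(2,\ZZ_4)$ on vectors with at least one odd entry---is true and elementary: if $b$ is odd, then $\bigl(\begin{smallmatrix}b&-a\\0&b^{-1}\end{smallmatrix}\bigr)\in SL(2,\ZZ_4)$ maps $(a,b)^T$ to $(0,1)^T$, and the case of $a$ odd reduces to this via $\bigl(\begin{smallmatrix}0&1\\-1&0\end{smallmatrix}\bigr)$; note also that the proof of Lemma~\ref{lem2} shows that $M(S)$ and $M(T)$ alone generate all $48$ determinant-one matrices, so no logical $Z$ gate is needed and you stay within the generating set allowed by the lemma.
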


\begin{proof}
Let $P$ and $Q$ be as in Eq.~(\ref{eq:PQ}). Since \begin{align}\sum_{i=1}^na_id_i-b_ic_i=1\end{align} by assumption, there exists $j$ such that \begin{align}r_j=a_jd_j-b_jc_j\in\lbrace+1,-1\rbrace\,.\end{align}
If $r_j=1$, we are done.
If $r_j=-1$, then there is $k\neq j$ with $r_k=2$ or $r_k=-1$.
From Lemma~\ref{lem2}, we know that from $S_i$ and $T_i$ we can generate single-qudit unitaries that change $\bigl(\begin{smallmatrix}a_i&c_i\\b_i&d_i\end{smallmatrix}\bigr)$ in arbitrary ways as long as $r_i=a_id_i-b_ic_i$ is preserved.
So up to gates that can be generated from $S_j$ and $T_j$, we can assume that $a_j=0$, $b_j=c_j=1$, and $d_j=0$.
If $r_k=2$ then, up to gates that can be generated from $S_k$ and $T_k$, we can assume that $a_k=1$, $b_k=c_k=0$, and $d_k=2$.
Finally, if $r_k=-1$ then, up to gates that can be generated from $S_k$ and $T_k$, we can assume that $a_k=1$, $b_k=1$, $c_k=-1$, and $d_k=2$.
We note that application of a phase-gate $C_{Z(j,k)}$ changes $r_j$ to $r_j'=r_j+(b_kd_j-b_jd_k)$, and recall that non-local phase gates $C_{Z(j,k)}$ can be generated from nearest-neighbor ones and SWAP gates, which we can generate according to Lemma~\ref{lem4}.
In both cases ($r_k=2$ and $r_k=-1$), we find that application of a phase-gate $C_{Z(j,k)}$ gives $r_j'=1$.
\end{proof}

\begin{lemma}\label{lem6}
 Given $P,Q\in\mathcal{P}_4^{\otimes n}$ with $(P,Q)=1$, we can generate $U\in\mathcal{C}_4^{\otimes n}$ from $S$, $T$, and nearest-neighbor $C_Z$ such that
\begin{align}
 P\mapsto_UZ\otimes P'\qquad\text{and}\qquad Q\mapsto_UX\otimes Q'\,,
\end{align}
with $P',Q'\in\mathcal{P}_4^{\otimes n-1}$.
\end{lemma}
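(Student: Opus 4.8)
The plan is to treat this as a short corollary of Lemmas~\ref{lem2}, \ref{lem4}, and~\ref{lem5}, the point being that Lemma~\ref{lem5} already does the hard combinatorial work of concentrating the symplectic pairing onto a single qudit; what remains is a swap and one single-qudit Clifford. First I would apply Lemma~\ref{lem5} to generate, from $S$, $T$, and nearest-neighbor $C_Z$, a gate $U_1$ for which $U_1PU_1\mdag$ and $U_1QU_1\mdag$ have local exponents $(a_j,b_j)$ and $(c_j,d_j)$ on some qudit $j$ with $a_jd_j-b_jc_j=1$. Next I would move this distinguished qudit into the first tensor slot: using the swap gate $i\mathcal{S}$ of Lemma~\ref{lem4}, realized as a chain of nearest-neighbor swaps, I bring qudit $j$ to position~$1$ and relabel. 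Since swapping merely permutes the tensor factors, the block $\bigl(\begin{smallmatrix}a_1&c_1\\b_1&d_1\end{smallmatrix}\bigr)$ now residing on qudit~$1$ still satisfies $a_1d_1-b_1c_1=1$.

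The essential step is the single-qudit cleanup. The determinant-one condition means that $\bigl(\begin{smallmatrix}a_1&c_1\\b_1&d_1\end{smallmatrix}\bigr)$ is invertible over $\ZZ_4$, with inverse again of determinant~$1$; by Lemma~\ref{lem2} there is then a single-qudit Clifford $V_1$ on qudit~$1$, generated from $S$ and $T$, with $M(V_1)=\bigl(\begin{smallmatrix}a_1&c_1\\b_1&d_1\end{smallmatrix}\bigr)^{-1}$. Conjugation by $V_1$ sends the qudit-$1$ exponent vector $(a_1,b_1)^T$ of $P$ to the first column $(1,0)^T$ of the identity and the vector $(c_1,d_1)^T$ of $Q$ to the second column $(0,1)^T$, so that the leading factor of $P$ becomes $Z$ and that of $Q$ becomes $X$, while the factors on qudits $2,\ldots,n$ are left untouched because $V_1$ acts there as the identity. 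Composing everything, $U=V_1(\text{swaps})U_1$ yields $P\mapsto_U Z\otimes P'$ and $Q\mapsto_U X\otimes Q'$, and any residual scalar $\ZZ_4$ phase produced along the way is folded into $P',Q'\in\mathcal{P}_4^{\otimes n-1}$.

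I do not expect a genuine obstacle, since the difficult reduction already lives in Lemma~\ref{lem5}. The only points requiring care are checking that the SWAP preserves the local determinant, verifying that the determinant-one condition is exactly what places the required inverse matrix in the image of $M$ on $\CC^{\otimes1}_4$ (guaranteed by the enumeration of the $48$ determinant-one matrices in Lemma~\ref{lem2}), and the routine bookkeeping of the unavoidable half-integer $\ZZ_4$ phases, which simply get absorbed into the residual Paulis $P'$ and $Q'$.
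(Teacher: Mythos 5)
Your proposal is correct and follows essentially the same route as the paper: apply Lemma~\ref{lem5} to obtain a qudit $j$ with $a_jd_j-b_jc_j=1$, use the SWAP of Lemma~\ref{lem4} to move it to position $1$, and finish with a single-qudit Clifford realizing the inverse local matrix via Lemma~\ref{lem2}. The paper writes that final gate explicitly as $M(L)=\bigl(\begin{smallmatrix}d_j&-c_j\\-b_j&a_j\end{smallmatrix}\bigr)$, which is exactly your $\bigl(\begin{smallmatrix}a_1&c_1\\b_1&d_1\end{smallmatrix}\bigr)^{-1}$ since the determinant is $1$, and it likewise absorbs the residual unit-modulus phases into the Pauli factors.
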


\begin{proof}
Let $P$ and $Q$ be as in Eq.~(\ref{eq:PQ}). By Lemma~\ref{lem5}, we can assume that there is $j$ with $a_jd_j-b_jc_j=1$.
Employing Lemma~\ref{lem4}, we can perform a SWAP between qudits $1$ and $j$. Finally, we perform a single-qudit unitary $L$ on qudit $1$ which is such that $M(L)=\bigl(\begin{smallmatrix}d_j&-c_j\\-b_j&a_j\end{smallmatrix}\bigr)$.
As $\det M(L)=1\text{ (mod 4)}$, such a unitary $L$ can be constructed from $S$ and $T$ according to Lemma~\ref{lem2}.
We note that
\begin{align}
 Z^{a_j}X^{b_j}\mapsto_L\alpha_ZZ \qquad\text{and}\qquad Z^{c_j}X^{d_j}\mapsto_L\alpha_XX\,,
\end{align}
with $|\alpha_X|=|\alpha_Z|=1$,
which completes the proof.
\end{proof}

\begin{lemma}\label{lem7}
 For any $V\in\mathcal{C}^{\otimes n}_4$ we can construct $U$ from $S$, $T$, $Z$, and nearest-neighbor $C_Z$ such that $UX_1U\mdag=VX_1V\mdag$ and $UZ_1U\mdag=VZ_1V\mdag$.
\end{lemma}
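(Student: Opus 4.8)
The plan is to use Lemma~\ref{lem6} to reduce the action of $V$ on qudit~$1$ to a canonical form, and then to strip off the resulting tails with controlled gates. Set $P=VZ_1V\mdag$ and $Q=VX_1V\mdag$; these lie in $\mathcal{P}_4^{\otimes n}$ because $V$ is Clifford, and since conjugation preserves commutation relations we have $(P,Q)=(Z_1,X_1)=1$. By Lemma~\ref{lem6} I can construct a gate $W$ from $S$, $T$, and nearest-neighbor $C_Z$ with $P\mapsto_W Z_1P'$ and $Q\mapsto_W X_1Q'$, where $P',Q'\in\mathcal{P}_4^{\otimes n-1}$ act on qudits $2,\ldots,n$. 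Because $W$ preserves the bracket, $(Z_1P',X_1Q')=1$; and since the bracket is additive over tensor factors, with the qudit-$1$ contribution equal to $(Z_1,X_1)=1$, this forces $(P',Q')=0$, i.e.\ $P'$ and $Q'$ commute. This commutation is exactly what will be needed below.

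Next I claim it suffices to construct a gate $D$, built from the generators, with $Z_1\mapsto_D Z_1P'$ and $X_1\mapsto_D X_1Q'$. Indeed, setting $U=W\mdag D$ gives $Z_1\mapsto_U W\mdag(Z_1P')W=W\mdag(WPW\mdag)W=P=VZ_1V\mdag$, and likewise $X_1\mapsto_U VX_1V\mdag$, as required. Both $W\mdag$ and $D$ are generated from $S$, $T$, $Z$, and nearest-neighbor $C_Z$, using Lemmas~\ref{lem1}--\ref{lem4} for $X$, $H$, $C_X$, and the SWAP that turns nearest-neighbor controlled gates into arbitrary ones.

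To build $D$, I attach the two tails separately. Using gates $C_{X(i,1)}$ (controlled on qudit $i$, target qudit $1$), together with Hadamard conjugation on qudit $i$ to exchange $Z_i\leftrightarrow X_i$, one attaches any desired $Z_i^{\alpha_i}X_i^{\beta_i}$ to $Z_1$ while leaving $X_1$ exactly fixed; composing over $i$ yields $D_Z$ with $Z_1\mapsto_{D_Z}Z_1P'$ and $X_1\mapsto_{D_Z}X_1$. Dually, using $C_X$ and $C_Z$ (controlled on qudit $1$) one builds $D_X$ with $X_1\mapsto_{D_X}X_1Q'$ and $Z_1\mapsto_{D_X}Z_1$. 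For $D=D_ZD_X$ one then gets $Z_1\mapsto_D Z_1P'$ directly, while $X_1\mapsto_D X_1\,(D_ZQ'D_Z\mdag)$.

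The main obstacle is controlling this last cross term $D_ZQ'D_Z\mdag$. The gates making up $D_Z$ act on qudit $i$ as $X_i\mapsto X_iX_1$ (and the Hadamard-conjugated versions as $Z_i\mapsto Z_iX_1^{\pm}$), so conjugating $Q'$ decorates it with a power of $X_1$. A direct count shows this power equals $\pm(P',Q')$, which vanishes by the commutation established in the first step; hence $X_1\mapsto_D X_1Q'$ as well. Finally, the controlled gates and the intrinsic phases of $P',Q'$ contribute only integer powers of $\omg$ (no half-integer phases arise, since qudit~$1$ carries a pure $Z$ or pure $X$), and these residual phases are removed by post-composing $D$ with suitable powers of $X_1$ and $Z_1$, which adjust the phase of one image without disturbing the other. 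This completes the construction of $D$, and hence of $U$.
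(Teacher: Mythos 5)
Your overall architecture is genuinely the same as the paper's: reduce with Lemma~\ref{lem6}, attach the tails $P'$ and $Q'$ by controlled gates, and kill the cross term between the two tails using $(P',Q')=0$ (the paper derives this identical identity at the end of its proof, in the form $1=(Z\otimes Q',X\otimes P')=1+\sum_{i\geq2}(a_id_i-b_ic_i)$). That inter-qudit cancellation of yours is correct. The gap lies in the building blocks $D_Z$ and $D_X$ themselves. You claim that $C_{X(i,1)}$ together with Hadamard conjugation on qudit $i$ ``attaches any desired $Z_i^{\alpha_i}X_i^{\beta_i}$ to $Z_1$ while leaving $X_1$ exactly fixed.'' When $\alpha_i\beta_i$ is odd, no unitary whatsoever can do this: $(Z^{\alpha}X^{\beta})^4=-\id$ for $\alpha\beta$ odd, so $Z_1Z_i^{\alpha_i}X_i^{\beta_i}$ has eigenvalues $\omg^{k+1/2}$ while $Z_1$ has eigenvalues $\omg^{k}$ --- precisely the non-isospectrality obstruction the paper flags immediately after Eq.~(\ref{eq:control}). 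Concretely, your $X_i$-attacher (the $H_i$-conjugate of $C_{X(i,1)}$) acts on qudit $i$ by $Z_i\mapsto Z_iX_1$, so it hits the already-attached factor $Z_i^{\alpha_i}$ and leaves an unwanted \emph{operator} factor $X_1^{\pm\alpha_i\beta_i}$ (not a phase) in the image of $Z_1$. The same same-qudit cross term afflicts $D_X$: composing $(C_X)^{d_i}(C_Z)^{c_i}$ gives $X_1\mapsto X_1X_i^{d_i}Z_i^{c_i}Z_1^{-c_id_i}$. Your proof cancels only the cross terms \emph{between different qudits} via $(P',Q')=0$; the same-qudit ones are never addressed, and your parenthetical claim that ``no half-integer phases arise'' is false --- such phases are forced whenever $\sum_i a_ib_i$ is odd.

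These leftovers are not harmless bookkeeping: in $D=D_ZD_X$ the stray $Z_1^{-\sum_i c_id_i}$ produced by $D_X$ is subsequently conjugated by $D_Z$ into powers of $Z_1P'$, so the $X_1$-image of $D$ acquires whole copies of $P'$, and the construction collapses. The repair is exactly the device the paper builds into its composite gate $C_{st}=S_1^{-st}(C_X)^s(C_Z)^t$: a phase-gate correction on qudit $1$ --- $S_1^{c_id_i}$ for your $D_X$, and its $X$-basis analogue $(\sqrt{i}H_1)S_1^{\mp\alpha_i\beta_i}(\sqrt{i}H_1)\mdag$ for your $D_Z$ --- which cancels the same-qudit cross term at the cost of an unavoidable half-integer phase $\omg^{\pm st/2}$. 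With these corrections in place, $DZ_1D\mdag$ and the target $Z_1P'$ agree up to a phase; since both are conjugates of $Z_1$ and hence isospectral, that phase is an integer power of $\omg$, removable by conjugation with powers of $X_1$ (and of $Z_1$ for the other image), as you propose at the end. So your route is sound in outline and essentially coincides with the paper's, but the step constructing $D_Z$ and $D_X$ fails as written and needs the $S$-type corrections plus the isospectrality argument to close.
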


\begin{proof}
Clearly,
\begin{align}
 (VZ_1V\mdag,VX_1V\mdag) = (Z_1,X_1) = 1\,,  
\end{align}
so by Lemma~\ref{lem6}, we can assume that $VX_1V\mdag=X\otimes P'$ and $VZ_1V\mdag=Z\otimes Q'$, up to gates that can be constructed from $S$, $T$, and nearest-neighbor $C_Z$.

Now let
\begin{align}\label{eq:PprimeQprime}
  P'&=\alpha_PZ_2^{a_2}X_2^{b_2}\ldots Z_n^{a_n}X_n^{b_n} \nn\\ Q'&=\alpha_QZ_2^{c_2}X_2^{d_2}\ldots Z_n^{c_n}X_n^{d_n}\,.
\end{align}
We define
\begin{align}
 C_{st,i} = S_1^{-st}(C_{X(1,i)})^s(C_{Z(1,i)})^t\,,
\end{align}
with $i\in\lbrace2,\ldots,n\rbrace$.
The gate
\begin{align}
 U = (\sqrt{i}H_1)U_Q(\sqrt{i}H_1)\mdag U_P\,,
\end{align}
with 
\begin{align}
 U_P = \prod_{i=2}^nC_{b_na_n,i} \quad\text{and}\quad U_Q = \prod_{i=2}^nC_{d_nc_n,i}\,,
\end{align}
can be constructed from $S$, $T$, and nearest-neighbor $C_Z$ according to Lemmas \ref{lem1}, \ref{lem3} and \ref{lem4}.

Using Eqs.~(\ref{eq:hadamard}) and (\ref{eq:control}), we find the sequences of mappings
\begin{align}
 X_1 &\mapsto_{U_P} X\otimes P' \mapsto_{(\sqrt{i}H_1)\mdag} Z\mdag\otimes P' \nn\\&\mapsto_{U_Q} Z^{-1-\sum_{i=2}^n(a_id_i-b_ic_i)}\otimes P' \nn\\&\mapsto_{\sqrt{i}H_1} X^{1+\sum_{i=2}^n(a_id_i-b_ic_i)}\otimes P'\,,
\end{align}
and
\begin{align}
 Z_1 &\mapsto_{U_P} Z_1 \mapsto_{(\sqrt{i}H_1)\mdag} X_1 \mapsto_{U_Q} X\otimes Q' \mapsto_{\sqrt{i}H_1} Z\otimes Q'\,,
\end{align}
up to phases.
Using again that $XZX\mdag=\bar{\omg}Z$ and $ZXZ\mdag=\omg X$, and that $X$ can ge generated according to Lemma~\ref{lem1}, allows us to generate arbitrary phases compatible with Eq.~(\ref{eq:form}).
Since
\begin{align}
 1 &= (Z_1,X_1) = (VZ_1V\mdag,VX_1V\mdag) = (Z\otimes Q',X\otimes P') \nn\\&= 1+\sum_{i=2}^n(a_id_i-b_ic_i)\,,
\end{align}
we finally conclude that 
\begin{align}
 X_1 &\mapsto_U X\otimes P'=VX_1V\mdag \nn\\ Z_1 &\mapsto_U Z\otimes Q' = VZ_1V\mdag\,,
\end{align}
as required.
\end{proof}

\begin{theorem}
 Any Clifford gate $V\in\mathcal{C}^{\otimes n}_4$ can be constructed from $S$, $T$, $Z$, and nearest-neighbor $C_Z$.
\end{theorem}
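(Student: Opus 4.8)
The plan is to prove the statement by induction on the number of qudits $n$, with Lemma~\ref{lem7} driving the inductive step. The base case $n=1$ is exactly Lemma~\ref{lem2}, which shows that $S$, $T$, and $Z$ generate the whole single-qudit Clifford group $\CC_4^{\otimes1}$, while Lemma~\ref{lem1} supplies the remaining phase conventions. I would therefore assume that every element of $\CC_4^{\otimes(n-1)}$ can be built from $S$, $T$, $Z$, and nearest-neighbor $C_Z$, and deduce the claim for $n$ qudits.

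For the inductive step, let $V\in\CC_4^{\otimes n}$ be arbitrary. Applying Lemma~\ref{lem7} to $V$, I would first construct a gate $U$ from $S$, $T$, $Z$, and nearest-neighbor $C_Z$ that matches $V$ on the first qudit, i.e. $UX_1U\mdag=VX_1V\mdag$ and $UZ_1U\mdag=VZ_1V\mdag$. Setting $W=U\mdag V$, it follows at once that $WX_1W\mdag=X_1$ and $WZ_1W\mdag=Z_1$, so $W$ commutes with both $X_1$ and $Z_1$.

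The key structural observation is that such a $W$ must act trivially on the first qudit. Since the operators $X^aZ^b$ with $a,b\in\ZZ_4$ span the full matrix algebra on a single four-level system, $X_1$ and $Z_1$ generate that entire algebra on qudit~$1$, so any operator commuting with both lies in its commutant and hence has the form $W=\id_1\otimes W'$ for some unitary $W'$ acting only on qudits $2,\ldots,n$. Conjugating an arbitrary Pauli $\id_1\otimes P'$ by the Clifford gate $W$ yields $\id_1\otimes(W'P'W'\mdag)$, which must again be a tensor product of Pauli operators, so $W'\in\CC_4^{\otimes(n-1)}$. By the induction hypothesis, $W'$---and hence $W=\id_1\otimes W'$---can be built from $S$, $T$, $Z$, and nearest-neighbor $C_Z$. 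Since $V=UW$ with both factors constructible, so is $V$, closing the induction.

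The genuine difficulty of the whole argument sits inside Lemma~\ref{lem7}, which is already established; granted that lemma, the theorem is largely bookkeeping. Within the step above, the one point demanding care is the factorization $W=\id_1\otimes W'$: it hinges on $W$ fixing $X_1$ and $Z_1$ \emph{exactly} (not merely up to a phase), which is precisely what places $W$ in the commutant of the full matrix algebra on qudit~$1$. Any stray phases picked up along the way are harmless, since Lemma~\ref{lem1} together with the relations $XZX\mdag=\bar{\omg}Z$ and $ZXZ\mdag=\omg X$ lets one realize every admissible phase of the form appearing in Eq.~(\ref{eq:form}).
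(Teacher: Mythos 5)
Your proof is correct and follows essentially the same route as the paper: induction on $n$ with the base case from Lemma~\ref{lem2}, the inductive step driven by Lemma~\ref{lem7}, and the factorization $U\mdag V=\id\otimes V'$ from the fact that $U\mdag V$ commutes with $X_1$ and $Z_1$. Your write-up merely makes explicit the commutant argument (and the need for exact, not merely phase-level, agreement on $X_1$ and $Z_1$) that the paper leaves implicit.
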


\begin{proof}
The proof is done by induction over $n$. The case $n=1$ is given by Lemma~\ref{lem2}. For $n>1$, let $U$ be as in Lemma~\ref{lem7}. Since $U\mdag V$ commutes with $X_1$ and $Z_1$, we have $U\mdag V=\id\otimes V'$, where $V'\in\mathcal{C}^{\otimes n-1}_4$ acts on qudits $\lbrace2,\ldots,n\rbrace$. Assuming that the induction hypothesis holds for $n-1$, $V'$ and hence $V$ can be constructed from $S$, $T$, and nearest-neighbor $C_Z$.
\end{proof}

\section{Defect lines and holes}\label{app:holes}

Quantum computation in surface codes often uses the concept of `hole' defects \cite{raussendorf,fowler_hole,wootton_hole2}. These are extended areas in which a single anyon can reside. Their large size makes it difficult to measure their anyon occupancy, and also to change it without leaving a trace nearby. This allows them to store an additional logical qubit in a topologically protected manner. The code distance is given by the size of the hole (for $Z$ errors) and the distance to the nearest to its neighbour (for $X$ errors), and so can be made arbitrarily large. They are primarily considered in systems without a background Hamiltonian, where they are created and moved using measurements \cite{raussendorf,fowler_hole,wootton_hole2}. However, they can also be created by adiabatic means when a Hamiltonian is present \cite{wootton_hole1,cesare,zheng}. We now discuss this in detail for our system.

\subsection{Enlarging and shrinking holes}

The stabilizer is generated by the plaquette operators $M_p$ and $E_p$ for all triangular and hexagonal plaquettes. Let us consider, however, removing some of these operators from the stabilizer. In terms of the Hamiltonian, this means removing their corresponding terms.

Specifically, let us remove the plaquette operators $E_p$ and $E_q$ for two triangular plaquettes $p$ and $q$. This will open up a new fourfold degeneracy in the stabilizer space. Corresponding $Z$ basis states $\ket{g}$ can be labelled by the $\omega^g$ eigenstates of $E_p$. These states are therefore distinguished by the type of $e_g$ anyon residing in the hole. Due to conservation of anyons, the antiparticle $e_{-g}$ must reside in $q$. A further fourfold degeneracy will arise from each additional triangular plaquette removed from the plaquette. This is because only one removed plaquette, such as $q$, needs to have its occupation determined by the conservation of anyons.

For a logical qudit encoded in the additional stabilizer space, an $X$ type operation corresponds to creating a particle/antiparticle pair of $e$ anyons. One is placed on $p$ and the other on $q$. The number of sites on which this process has support, and so the number of sites on which noise must act in order to cause a logical $X$ error, is the distance between the two plaquettes. The qudit will therefore be topologically protected against such errors as long as the plaquettes are well separated.

The logical $Z$ of the logical qubit corresponds exactly to the operator $E_p$, and $E_{q}$ corresponds to $Z^\dagger$. Since these are three-body operators, this type of logical error requires action on only three qubit pairs. The stored qudit is therefore clearly not topologically protected against $Z$ type errors.

To address this problem, we can deform the lattice by making the plaquettes $p$ and $q$ larger. By making them arbitrarily large, logical $Z$ errors can be arbitrarily suppressed. 

To enlarge $p$, consider a neighbouring triangular plaquette $p'$. Let us use $j$ to denote the single site shared by these. If the stored qudit holds an arbitrary state $\ket{g}$, the state of the code will be a $\omega^g$ eigenstate of $E_p$. It will also be an $\omega^g$ eigenstate of $E_p E_{p'}$, since the state is a $+1$ eigenstate of the stabilizer $E_{p'}$.

Consider the adiabatic introduction of the term $X_j+X_j^\dagger$ to the Hamiltonian. This term should be much stronger than the adjacent plaquette operators, and so will effectively force $j$ into an eigenstate of $X$ and remove it from the code. Since this term does not commute with $E_p$ or $E_{p'}$, the resulting state will not be an eigenstate of these operators. However, the term does commute with the product $E_p E_{p'}$ since the support of the two plaquette operators on $j$ cancels in this product. It therefore remains the same $\omega^g$ eigenstate as it was for the initial state. The qudit state $\ket{g}$ has therefore been effectively transferred from the single plaquette $p$ to the combined plaquette $pp'$. The operator $E_p E_{p'}$ becomes the logical $Z$, and has support on four qubit pairs rather than three. This is further extended as more plaquettes are added using more $X_j + X_j^\dagger$ terms. As long as this is done for both $p$ and $q$, the qudit will become topologically protected against $Z$ errors as well as $X$.

The process used to extend holes can be reversed in order to shrink them. Consider a set of triangular plaquettes $p$, $p'$, $p''$, $\ldots$ that have been combined into a single hole. The basis state $\ket{g}$ of the qubit stored in this hole is associated with the $\omega^g$ eigenstate of $E_p E_{p'} E_{p''} \ldots$ We wish to shrink this hole so that $p$ is no longer a part of it. The state will then have a $+1$ eigenvalue for $E_p$, and the qudit state $\ket{g}$ will be associated with the $\omega^g$ eigenstate of $E_{p'} E_{p''} \ldots$

To achieve this, recall that the combination of the plaquettes in the hole is is enforced by the strong $X_j + X_j^\dagger$ terms on their shared sites. To remove $p$ from the hole, the term $E_p + E_p^\dagger$ should be added to the Hamiltonian, and the $X_j + X_j^\dagger$ term incident on $p$ should be removed. Doing this adiabatically will result in a final state with the $E_p$ term in its ground state, which is its $+1$ eigenspace. Due to conservation of anyon charge, the $e_g$ anyon that was held in the larger hole must still be held in the smaller hole. The qudit state therefore remains $\ket{g}$.

As well as this being true for each basis state $\ket{g}$, we must also be sure that the process preserves coherent superpositions. Any process that causes decoherence in this basis will correspond to measurement (by the environment) in the $Z$ basis. Any unitary that introduces unwanted relative phases can be expressed as a sum of powers of $Z$. As such, these processes must have support on all sites on which $Z$ has support, which are all sites around the hole. Since the shrinking (and expansion) of holes does not have such support, it cannot cause any decoherence.

Corresponding processes can also be applied to hexagonal plaquettes. In that case, a logical qudit can be stored in the $m_g$ occupations of plaquettes. Such qudits can be topologically protected by using lines of $Z_j + Z_j^\dagger$ terms to combine neighbouring horizontal plaquettes. 

Using the processes of extending and shrinking holes, it is possible to move them. Gates can then be implemented through braiding. Braiding an $e$-type hole of triangular plaquettes in state $\ket{g}$ around an $m$-type hole of hexagonal ones in state $\ket{h}$ corresponds to braiding the $e_g$ anyon held by the former around the $m_h$ of the latter, yielding a phase $\omega^{gh}$. This is a qudit generalization of the controlled phase gate.

\subsection{Fusing holes into defect lines}

By considering the alternative stabilizer generators $S$ and $R$ discussed in Sec.~\ref{sec:defect}, holes can also be created which hold $\psi_g$ anyons. These are formed by similarly combining the double plaquettes. Indeed, these are exactly the defect lines considered in the bulk of this paper. These have the property that anyons crossing the defect line undergo an automorphism that preserves the structure of the underlying Abelian state: it maps $e$ anyons to their dual, the $m$ anyons, and \emph{vice versa}. This property is not shared by the $e$- and $m$-type holes. In these cases, the lines form a boundary along which one type of anyon can condense, but the other cannot cross. It is this difference that gives the $\psi$-holes additional properties, namely the localized parafermion modes at their endpoints, that the $e$- and $m$-holes do not possess. The topological degeneracy and protection, however, is a property shared by all three.

An $e$-type hole and an $m$-type hole together correspond to a two-qudit space. However, let us consider the subspace spanned by states $\ket{g,g}$. These are such that the $e$-type hole carries an $e_g$ anyon whenever the $m$-type hole carries an $m_g$. A single qudit can be stored in this subspace

Since $\psi_g = e_g \times m_g$, two holes as described above hold a net $\psi_g$. Similar fusion can also be applied to holes, as we will now show. Specifically an $e$-type and an $m$-type hole can be combined into a $\psi$-type hole, and a $\psi$-type hole can be split into an $e$-type and $m$-type one. 

These processes are in fact a simple generalization of the hole extension and shrinking processes described above. Suppose we have a defect line along double plaquettes $P$, $P'$, $P''$, $\ldots$ This stores a qudit whose basis states $\ket{g}$ are $\omega^g$ eigenstates of $W_P W_{P'} W_{P''} \ldots$ Let us now extend this line. However, rather than adding another double plaquette, we instead add a triangular plaquette $p$. This can be done by adiabatically introducing the strong term $X_j + X_j^\dagger$ to the Hamiltonian on a site shared by $p$ and the triangular part of $P$. This term does not commute with either $E_p$ or $W_P$. However, it does commute with their product. The final state will then have the qudit basis states defined by the operator $E_p W_P W_{P'} W_{P''} \ldots$. Further such processes can be used to extend the $e$-type part of the defect line. Corresponding processes on the hexagonal plaquettes can be used to grow an $m$-type part of the defect line.
An illustration is given in Fig.~\ref{fig:split}.

\begin{figure}[h]
\centering
    \includegraphics[width=.90\columnwidth]{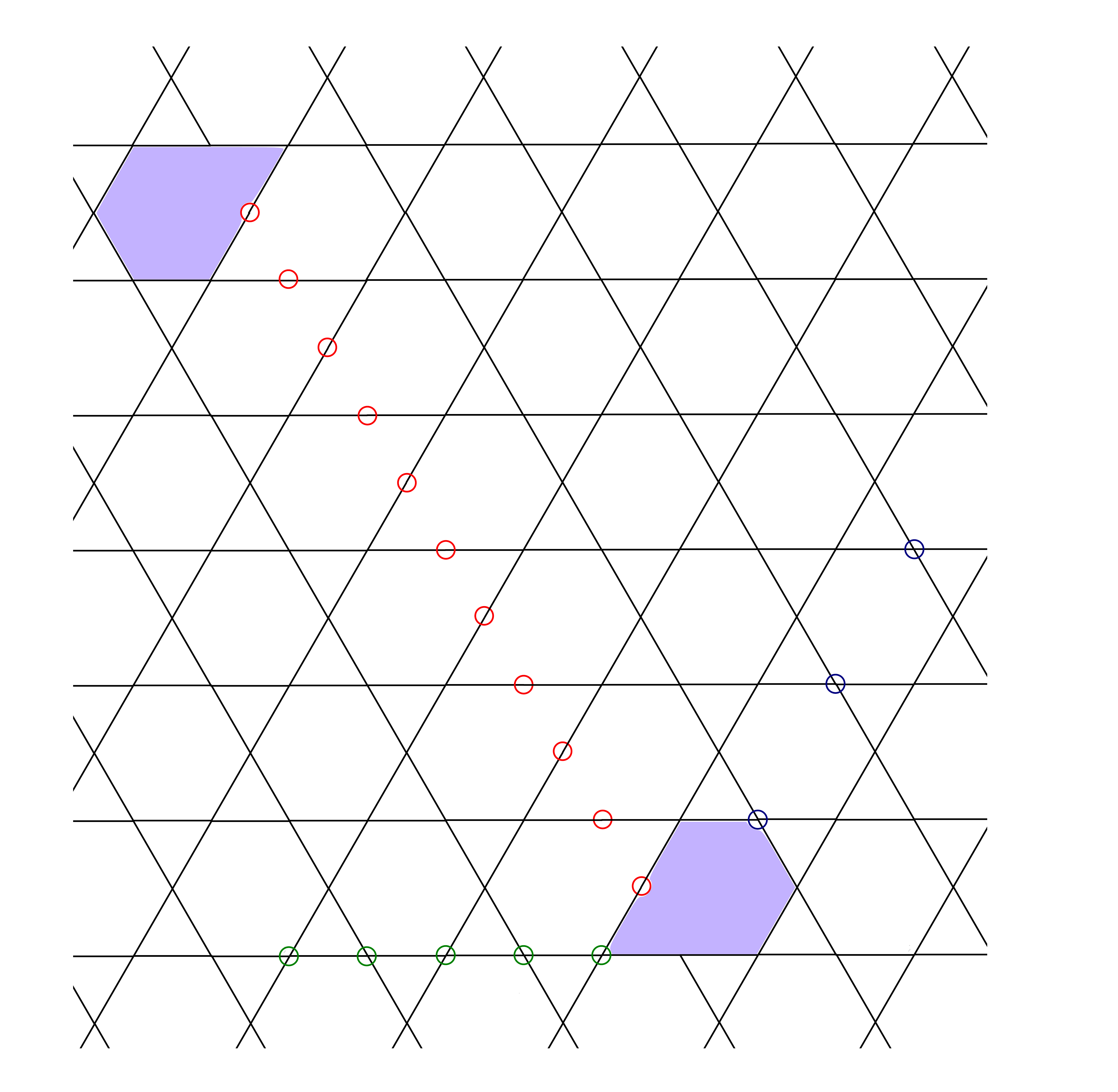} 
  \caption{Red circles denote defects as in Eq.~(\ref{eq:defects}), with a pair of parafermions (purple) emerging at the ends of the defect line. Blue circles correspond to terms of the form $X_j+X_j\mdag$, while green circles correspond to terms of the form $Z_j+Z_j\mdag$. These grow the $e$- and $m$-part of the defect line, respectively. Once both the blue and green defect lines have been added, the red defect line can be removed. The $\psi_g$ particle initially stored in the red defect line has then been split into its $e_g$ and $m_g$ components.}
  \label{fig:split}
\end{figure}

When both $e$-type and $m$-type parts have been added, the original $\psi$-type defect line can be removed. This is done simply by removing the parity operator terms along its length and allowing the $S_P$ terms to again dominate. The end result is that the $\psi_g$ originally stored in the defect line now resides in the $e$-type hole as an $e_g$ and the $m$-type hole as an $m_g$, corresponding to the state $\ket{g,g}$ of their individual qudits. As for the shrinking of holes, this process does not have sufficient support to distinguish between different basis states. The process therefore does not decohere any superpositions of these states, nor does it assign any relative phases.

To recombine the two holes into a single defect line, the process is simply reversed. This will be straightforward if the two holes are in a state of the form $\ket{g,g}$, since the state of the defect line will simply become $\ket{g}$. However some processes, such as mistakes during error correction, could result in holes whose states are not of this form. This will introduce frustration that will not allow all of the involved triangular and hexagonal plaquettes to return to their ground state after recombination.

As an example, consider a state of the form $\ket{g,h}$. This corresponds to an $e_g$ and an $m_{h}$, and could arise from an initial state $\ket{g,g}$ if an $m_{h-g}$ were added in error to the $m$-type hole, or from $\ket{h,h}$ with an $e_{g-h}$ error on the $e$-type hole.

The anyons $e_g$ and an $m_{h}$ can combine either to a $\psi_g$ and $m_{h-g}$, or a $\psi_{h}$ and $e_{g-h}$. The former will be energetically favourable, due to the weaker strength of the $M_p$ terms. The adiabatic process will therefore result in $\psi_g$ being stored on the defect line and an $m_{h-g}$ anyon present as an excitation on one of the triangular plaquettes that was once part of the holes. Syndrome measurement will then detect the $m_{g-h}$. However, since the value $g-h$ gives information only about the error that occurred, and not the value of $g$ or $h$, this does not extract any information about the stored qudit.

\end{document}